\newcommand{\cmark}{\ding{51}}%
\newcommand{\xmark}{\ding{55}}%
\newcommand{\sys}{\mbox{$T^3$}\xspace}
\renewcommand\paragraph[1]{\vspace{0.3mm}\noindent \textbf{#1\ }}
\newtheorem{theorem}{Theorem}
\newtheorem{claim}[theorem]{Claim}
\newtheorem{definition}{Definition}
\newcommand{\PAccess}[1]{\mathsf{Access}_\Sigma(#1)}
\newcommand{\ORequest}[1]{\stackrel{\rightarrow}{#1}}
\newenvironment{todo-text}{\color{Red}}{\hfill}
\newcommand{\SystemName}{$T^3$\xspace}
\newcommand*\rcircled[1]{\tikz[baseline=(char.base)]{
    \node[fill=black,text=white, shape=circle,draw=black,inner sep=.6pt] (char) {#1};}}
\newcommand*\bcircled[1]{\tikz[baseline=(char.base)]{
    \node[shape=circle,draw,inner sep=.6pt] (char) {#1};}}
\newcommand{\managingTEE}{\textit{managing} TEE\xspace}
\newcommand{\readTEE}{\textit{reading} TEE\xspace}
\newcommand{\updateTEE}{\textit{writing} TEE\xspace}
\newcommand{\managingEnclave}{\textit{managing} enclave\xspace}
\newcommand{\readEnclave}{\textit{reading} enclave\xspace}
\newcommand{\updateEnclave}{\textit{writing} enclave\xspace}
\newcommand{\readTree}{\textit{read-once} ORAM tree\xspace}
\newcommand{\updateTree}{\textit{original} ORAM tree\xspace}
\newcommand{\readaccessname}{\textit{read-once} ORAM access\xspace}
\newcommand{\readname}{\textit{\textit{read-once}}\xspace}
\newcommand{\mEnclave}{$\mathcal E_{m}$\xspace}
\newcommand{\rEnclave}{$\mathcal E_{r}$\xspace}
\newcommand{\uEnclave}{$\mathcal E_{w}$\xspace}
\newcommand{\evict}{$\mathsf{Eviction}$}
\newcommand{\pathORAM}{\ensuremath{\textsc{Path\text{-}ORAM}}\xspace}
\newcommand{\circuitORAM}{\ensuremath{\textsc{Circuit\text{-}ORAM}}\xspace}
\newcommand{\bci}{\textit{block creation interval}\xspace}
\newcommand{\cmov}{$\mathsf{cmov}$\xspace}
\title{
A Tale of Two Trees:
One Writes, and Other Reads\\
{\large Optimized Oblivious Accesses to Large-Scale Blockchains}}
\begin{document}
\author{\IEEEauthorblockN{Duc V. Le}
\IEEEauthorblockA{Purdue University}
\and
\IEEEauthorblockN{Lizzy Tengana Hurtado}
\IEEEauthorblockA{National University of Colombia}
\\
\IEEEauthorblockN{Byoungyoung Lee}
\IEEEauthorblockA{Seoul National University}
\and
\IEEEauthorblockN{Adil Ahmad}
\IEEEauthorblockA{Purdue University}
\\
\IEEEauthorblockN{Aniket Kate}
\IEEEauthorblockA{Purdue University}
\and
\IEEEauthorblockN{Mohsen Minaei}
\IEEEauthorblockA{Purdue University}
}

\maketitle        

\begin{abstract}
The Bitcoin network has offered a new way of securely performing financial transactions over the insecure network.
Nevertheless, this ability comes with the cost of storing a large (distributed) ledger, 
which has become unsuitable for personal devices of any kind. 
Although the simplified payment verification (SPV) clients can address this storage issue, a Bitcoin SPV client has to rely on other Bitcoin nodes to obtain its transaction history and the current approaches offer no privacy guarantees to the SPV clients.

This work presents \SystemName, a trusted hardware-secured Bitcoin full client that supports efficient and scalable oblivious search/update for Bitcoin SPV clients without sacrificing the privacy of the clients. 
In this design, we leverage the trusted execution and attestation capabilities of a trusted execution environment (TEE) and the ability to hide access patterns of oblivious random access memory (ORAM)
to protect SPV clients' requests from a potentially malicious server. 
The key novelty of \SystemName lies in the optimizations introduced to conventional oblivious random access memory (ORAM),
tailored for expected SPV client usages.
In particular, by making a natural assumption about the access patterns of SPV clients, 
we are able to propose a two-tree ORAM construction that overcomes the concurrency limitation associated with traditional ORAMs. 
We have implemented and tested our system using the current Bitcoin Unspent Transaction Output database. 
Our experiment shows that the system is highly efficient in practice while providing strong privacy and security guarantees to Bitcoin SPV clients. 
\end{abstract}
\section{Introduction}
\label{sec:introduction}

Over the last few years, we have seen a great interest in public blockchain in the community. 
The Bitcoin blockchain offered a way to provide security and privacy for financial transactions. 
However, due to the huge adoption by the community, the size of the Bitcoin blockchain has become too large for small and resource-constrained devices such as personal laptops or mobile phones, raising not only performance but also privacy concerns in the community.
As of October 2018, the size of the unindexed Bitcoin blockchain is 230 GB.

To this end, Bitcoin's simplified payment verification (SPV) client has become a widely-adopted solution to resolve a storage problem for constrained devices. 
Nakamoto~\cite{Nakamoto_bitcoin:a} sketched the idea of SPV clients in the Bitcoin whitepaper, 
and in the Bitcoin improvement proposal 37 (BIP37)~\cite{BIP37}, Mike Hearn combines Nakamoto's idea with the use of Bloom filters to standardize the design of Bitcoin SPV clients. 
This design has become de facto standard and been used by other SPV clients such as BitcoinJ~\cite{bitcoinj-cite} and Electrum~\cite{electrum-cite}.

The core of SPV clients is in only downloading and then verifying part of the blockchain that is relevant to the SPV client itself.
In particular, the SPV client loads its addresses into a Bloom filter and sends the filter to a Bitcoin full client, and 
The Bitcoin full client will use that filter to identify if a block contains transactions that are relevant to the SPV client, 
and once it finds the block, it will send a modified block that only contains relevant transactions along with Merkle proofs for those transactions. 

However, the current SPV solution relied on Bloom filters raises security and privacy concerns to the SPV clients when communicates with potentially malicious nodes.
In particular, Gervais et al.~\cite{Gervais:2014:SPV-privacy} show that it is possible for a malicious node to learn several addresses of the client from the Bloom filter with high probability.
Moreover, if the adversarial node can collect two filters issued by the same client, then a considerable number of addresses owned by the client will be leaked. 

To provide a strong privacy guarantee for SPV clients, one needs a solution that can hide wallets/addresses queried by the SPV clients. 
While such a system can be built using private information retrieval (PIR) primitive, the existing cryptographic PIR solutions~\cite{JaschkeGAS17} are not been practical to scale to handle millions of Bitcoin users.
On the other hand, to gain more efficiency, one can use ORAM and trusted execution environment to propose generic PIR systems~\cite{thang-hoang:posup-popets,oblidb,SasyGF18-zero-trace}. 
However, as it becomes apparent in the later in this paper, naively combining ORAM scheme as it is with TEE makes 
the practicality of those generic systems questionable when used in a large network like Bitcoin due to the lack of concurrency in ORAM as well as the limitation of TEE with restricted memory.


\noindent\textbf{Our Contribution.} 
This work aims not only to design a system that provides SPV clients with privacy-preserving access to the Bitcoin blockchain data but also to consider other practical aspects on how to scale such a system to handle client requests in a large-scale.
Our contributions can be summarized as follows:

Firstly, we present a novel design for a system that can handle up to thousands of requests per minute from Bitcoin SPV clients based on a 
\textit{restricted access} Oblivious Random Access Memory (ORAM) and the trusted execution capabilities of TEE. 
In particular, one of the main contributions of our design is the optimization access in the prominent tree-based ORAM schemes
that allow those ORAM schemes to support concurrent accesses which is essential for handling SPV clients' requests. 
In this design, the access privacy guarantee is still maintained because of our natural assumption that the rational Bitcoin SPV clients should only query for their particular transaction {\em once} before the arrival of a new Bitcoin block. 
Nevertheless, we later show that even when the SPV clients are irrational then the privacy for such clients is only compromised for a short period of time.  
The security guarantee of \sys also relies on the trusted execution capabilities of TEE that allows SPV clients to perform ORAM operations securely and remotely.
Our generic design works with other blockchains, any tree-based ORAM schemes~\cite{Elaine-rORAM,Stefanov:2013,wang-circuit-oram-2015}, and any TEE with attestation capability.



Secondly, we implemented a prototype of \sys and evaluated its performance to demonstrate the practicality of our approach. 
More specifically, we extracted the unspent transaction outputs set of Bitcoin in October 2018 and used it to measure the performance of the system when handling clients' requests. 
The implementation of \sys also adopts standard techniques (i.e., oblivious operations using $\mathsf{cmov}$~\cite{SasyGF18-zero-trace,ndss-AhmadKSL18,racoon}) to be secure against known side-channel attacks~\cite{shadow-branch-lee-usenix17,hid-sgx-sidechannel-usenix17,Xu15ControlledChannel}.
Moreover, the use of recursive ORAM constructions in \sys makes the system much more suitable for TEE with restricted trusted memory like Intel SGX. 
We then show that the running time of the ORAM read access decreases linearly with the number of the threads used (e.g, up to $8\times$ performance gained with $4$ threads). 
%



Finally, we conclude that putting natural restrictions on the access patterns on oblivious memory can lead to significant performance improvement and better ORAM design.  While the applicability of \SystemName in cryptocurrencies beyond Bitcoin is apparent, we believe our work will also motivate further research on oblivious memory with restricted access patterns.

\paragraph{Concurrent Work.}
The soon-to-be published BITE system~\cite{bite-spv-sgx} also employs the Oblivious Database construction for SPV client privacy.
The main idea of the BITE construction is to combine the use of non-recursive \pathORAM~\cite{Stefanov:2013} construction and TEE (such as Intel SGX) to propose a generic system that offers SPV client with oblivious access to the database. 
However, BITE did not address several shortcomings of using \pathORAM as it is and TEE with restricted memory in practice. 
In particular, the BITE design did not consider use recursive ORAM constructions to reduce the trusted memory usage; therefore, the efficiency of the system will be degraded once the size of the database gets too large.  
Moreover, due to the inherent lack of concurrency in tree-based ORAM such as \pathORAM, naively using Path-ORAM makes BITE unsuitable for handling thousands of Bitcoin client's requests per minute as well as thousands of updates every fixed period of times (e.g., 10 minutes for Bitcoin).
In this work,  we investigate the use of both recursive \pathORAM and recursive \circuitORAM to understand the actual performance and the actual storage overhead put on the server. Importantly, we propose a two-tree ORAM design to further enhance the performance of standard ORAM accesses as well as to allow concurrent requests from the SPV client. 
\begin{figure*}[h!]
    \centering
    \includegraphics[scale=.65]{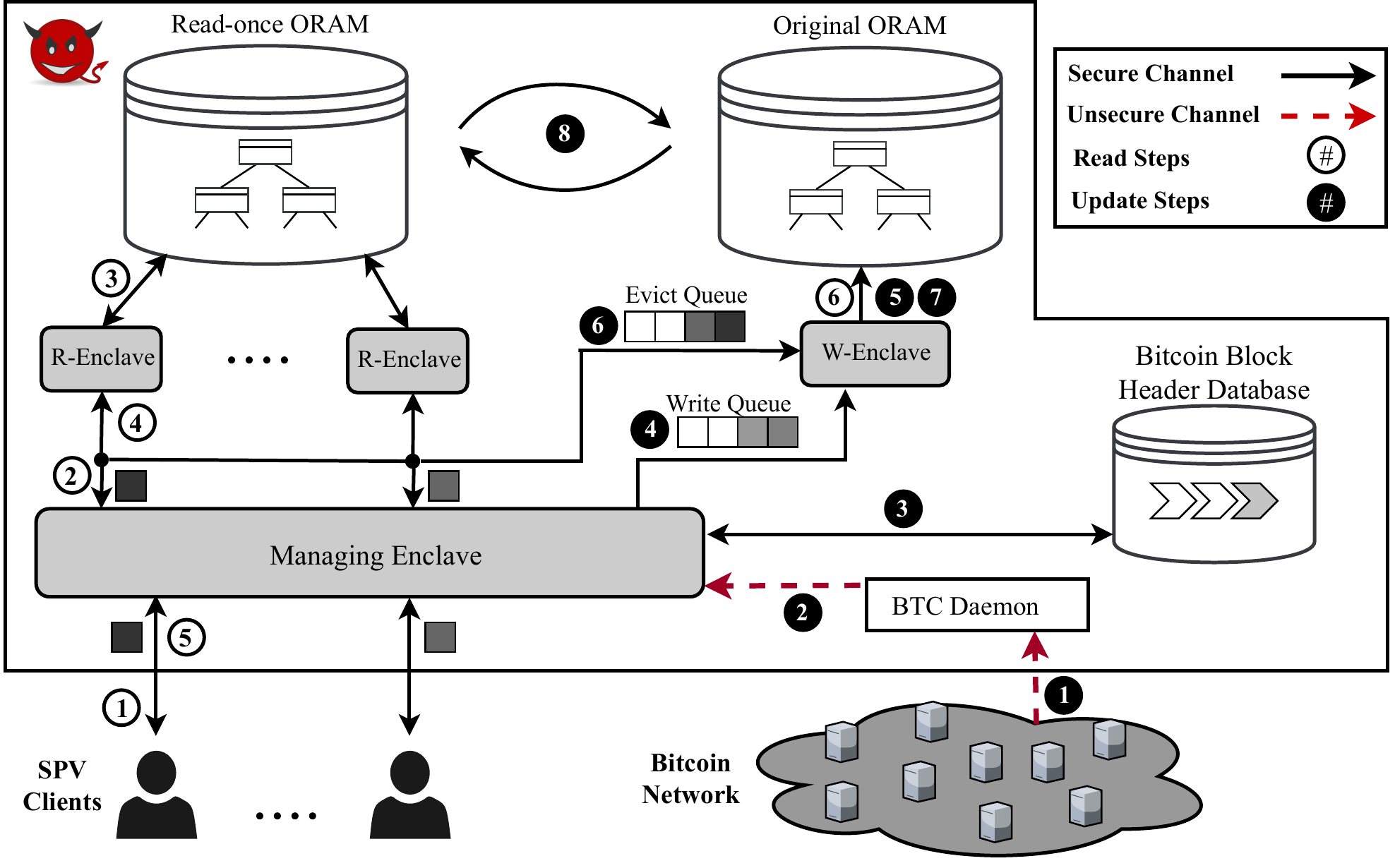}
    \caption{An overview of \sys workflow. The encrypted ORAM databases are stored in server untrusted memory region. Steps \protect\bcircled{1}-\protect\bcircled{6} describe the flow of the request sent from SPV clients. Steps \protect\rcircled{1}-\protect\rcircled{8} describe the flow of the update procedure when \sys receives new Bitcoin block} 
	\label{sub:system_overview}
\end{figure*}
\section{Design Goals and Solution Overview} 
\label{sec:Overview}
In this section, we define the system components, outline our security goals, and give an overview of how our system works.
\subsection{System Components} 
\label{sub:system_components}
There are three key components of this system: the Bitcoin network, a client, and a untrusted server.
	The \textbf{Bitcoin network} is a set of nodes that maintains the Bitcoin blockchain, and the network validates and relays the new Bitcoin block produced by miners. 
%
	A \textbf{client} is a Bitcoin simplified payment verification node that remotely connects to the secure TEE on the untrusted server to perform oblivious searches on the unspent transaction output (UTXO) set. The client is also able to connect to the Bitcoin Network to obtain other network metadata such as the latest Bitcoin block header. 
	A \textbf{server} is the untrusted entity made up of two components: an untrusted server and several trusted TEEs (i.e., the \textit{managing}, \textit{reading}, and \textit{writing} TEEs). Moreover, the untrusted server stores three encrypted databases which are the \readTree, the \updateTree, and the Bitcoin header chain. The untrusted server hosts a potentially malicious bitcoin client (e.g., $\mathsf{bitcoind}$) that handles the communication with the Bitcoin Network. 

\subsection{Design Goals} 
\label{sub:goal}
The goal of our system is to leverage the trusted execution capabilities of Trusted Execution Environment (TEE) with attestation to design a public Bitcoin full node that supports oblivious search and update on the current Bitcoin unspent transaction output database. 
%
Our system aims to provide data confidentiality and privacy to Bitcoin SPV clients in a large scale by using standard encryption and Oblivious RAM techniques on the current set of unspent transaction outputs. 
The main goals that \sys tries to achieve are:
\begin{compactenum}
	\item \textbf{Privacy.}     \sys aims to provide privacy and confidentiality to SPV clients' requests. 
	                    In particular, 
	                    the system allows SPV clients to obliviously search its relevant transactions without revealing their addresses to potentially malicious providers by using TEE to encrypt the data and using ORAM schemes to eliminate known side channel leakages~\cite{racoon,ndss-AhmadKSL18,thang-hoang:posup-popets,SasyGF18-zero-trace}.  
	\item \textbf{Validity.}    In our design, the SPV client should be able to obtain valid information based on the provided addresses, 
	                    and a malicious adversary should not able to tamper the Blockchain data with invalid transaction outputs. 
	\item \textbf{Completeness.} The system should provide clients with access to most of its relevant transactions in order to determine balance or to obtain essential information to form new transactions.
	\item \textbf{Efficiency.}  The system should be practical to deploy. 
						More specifically, the system should be efficient enough to handle different concurrent SPV clients' requests without compromising the privacy of the clients. 
\end{compactenum}


\subsection{Solution Overview} 
	The idea of using ORAM schemes and trusted execution environment to construct database systems that support oblivious accesses has been investigated by the research community~\cite{thang-hoang:posup-popets,oblidb,SasyGF18-zero-trace}.
	However, the efficiency and scalability of those systems are hampered by the lack of concurrency of traditional ORAM schemes~\cite{Stefanov:2013,wang-circuit-oram-2015}.

	In this work, we design \sys to overcome the limitations of efficiency and
	concurrency plaguing existing systems. Our design is motivated by
	the following observations. 
	The first observation is that each ORAM access in a standard tree-based ORAM settings is a combination of two operations: a \textit{read-path} operation and an \textit{eviction} operation.
	By separating the effects two operations into two different trees: a \readTree and an \updateTree, one can use read-path operation on the \readTree to handle clients' requests simultaneously while performing a non-blocking eviction operation on the \updateTree sequentially.
	This design is also independently investigated by ConcurORAM~\cite{concurORAM-chakraborti}; however, their design is not suited
	for TEEs with limited trusted memory (such as Intel SGX). We elaborate on this
	in the coming sections.
	
	The second observation is that the access privacy guarantee of this approach relies the characteristic of the Bitcoin blockchain. 
	In particular, the Bitcoin network generates new Bitcoin block on average of 10 minutes, and if we require \sys to periodically synchronize these the two trees, then the privacy of clients' queries are preserved. 
	Moreover, if we assume that upon receiving transactions belonged to its addresses, the rational client should not query same transactions {\em again} until the next block arrives, the proposed approach on the separation of \textit{read-path} and \textit{eviction} procedure not only does not affect the privacy guarantees of ORAM access but also allows \sys to handle much more clients' requests. 
	More importantly, we also argue that even when the SPV clients are irrational by submitting requests for the same transaction more than one, 
	the privacy of those clients is only compromised for a short period of time (i.e., 10 minutes for the Bitcoin network) because \sys will always synchronize the old instance of the \readTree with the more updated instance of the \updateTree.
	With the intuition of \sys described above, we outline the workflow
	of our design:

\noindent	\textbf{Server Initialization}~\rcircled{1}-\rcircled{8}: 
	Initially, the managing TEE will initialize a \updateTEE that creates an empty ORAM tree. 
	For each of Bitcoin block obtained from the network, the managing TEE verifies the proof of work of the block before passing relevant update data to the \updateTEE in order to populate the ORAM tree.
	With the current size of the Bitcoin blockchain, this operation might take several hours. 
	However, once the TEEs catch up with the current state of the Bitcoin blockchain, we expect that the TEE only has to perform a batch of update accesses on the ORAM tree every 10 minute.
	When the initialization is completed, the \managingTEE creates two copies of the ORAM tree which are the \readTree and the \updateTree. 

\noindent\textbf{Oblivious \readname Protocol}~\bcircled{1}-\bcircled{6}: In order to obtain its unspent outputs, the client first performs the remote attestation to the \managingTEE. 
	The remote attestation mechanism allows the client to verify the correctness of program execution inside the TEE. 
	More importantly, after a successful attestation, the client can use standard key exchange mechanism (i.e. Diffie-Hellman's key exchange) to share a secret session key with the TEE in order to establish a secure connection with the \managingTEE. Upon receiving client's connection requests, the \managingTEE creates a \readTEE with its own copies of the ORAM position map and the ORAM stash to handle client subsequent requests. 
	Next, after having a secure channel, the client will send his Bitcoin addresses along with the proof of ownership of those addresses to the TEE (e.g., the knowledge of the public key along with a signature to a random nonce or the preimage of the public key hash). 
	The \readTEE will use a mapping function to map Bitcoin addresses into the ORAM block identification number and performs \readaccessname on the ORAM tree. 
	In particular, those \readaccessname do {\em not} involve the eviction procedure which requires re-encrypting and remapping the ORAM block. 
	The eviction procedure will be performed on the \updateTree~by the \updateTEE.
	
    
\noindent \textbf{Oblivious Write Protocol}~\rcircled{1}-\rcircled{8}: 
	The \sys requires to update the ORAM tree via batch of write accesses every 10 minutes on average. 
	In particular, 
	\sys will rely on a standard Bitcoin client 
	to handle the communication with the Bitcoin network to obtain blockchain data\footnote{This ability can be easily included in the future implentation of \sys.}.
	Thus, \sys needs to verify the block relayed by a potentially malicious Bitcoin client before updating the ORAM tree. 
	More specifically, in the design,  \sys stores a separate Bitcoin header chain to verify the proof of work and the validity of all transactions inside a Bitcoin block.
 	After the verification, the \managingTEE forms a batch of ORAM updates and delegates those updates to the \updateTEE. 
	Once those updates are finished, the \managingTEE will queue up read requests from SPV clients in order to allow the \updateTEE to finish the eviction requests from the $\textit{read}$ TEEs during the updating interval. 
	As soon as the \updateTEE finishes performing those eviction requests, 
	the \managingTEE updates the position map and \textit{stash}, and makes the ORAM tree used by the \updateTEE become the new ORAM tree used by \readTEE.
	At this point, the \readTEE can use the new tree instance to respond to clients' requests while the \updateTEE performs the eviction procedure on another copy of the same ORAM tree.

\section{Preliminaries and Threat Model} 
\label{sec:preliminaries}
\subsection{Trusted Execution Environment} 
\label{sub:intel_sgx}
The design of \SystemName relies on a trusted execution environment (TEE) to prove the correctness of the computations. 
In particular, TEE is a trusted hardware that provides both confidentiality and integrity of computations as well as offer an authentication mechanism, known as \textit{attestation}, for the client to verify computation correctness.
In this work, we chose Intel SGX~\cite{sgx-explained} to be the building block of our system.
However, with minor modifications, the design of our system can be extended to any TEE with \textit{attestation} capabilities such as Keystone-enclave~\cite{keystone-project} and Sanctum~\cite{constant-sanctum-usenix} as other trusted execution environments might not have the same strengths/weaknesses as Intel SGX.

Intel SGX is a set of hardware instructions introduced with the 6th Generation Intel Core processors. 
We use Intel SGX as a TEE for the execution of an ORAM controller on the untrusted server. 
The relevant elements of SGX are as follows.
	\textbf{Enclave} is the trusted execution unit that is located in a dedicated portion of the physical RAM called the enclave page cache (EPC). The SGX processor makes sure that all other software components on the system cannot access the enclave memory. 
%
	Intel SGX supports both {\bf local and remote attestation} mechanisms to allow remote parties or local enclaves to authenticate and verify if the program is correctly executed within an SGX context.
	More importantly, attestation protocols provide the authentication required
	for a key exchange protocol \cite{sgx-explained}, i.e.,
	after a successful attestation, the concerned parties can agree on a shared session key using Diffie-Hellman Key Exchange~\cite{DH-keyexchange} and create a secure channel.

\noindent\textbf{Limitations.}
Intel SGX comes with various limitations which have been uncovered by the
academic community over the past few years.
In particular, some of the limitations are:
\begin{compactitem}
	\item \textbf{Side Channel Attacks:} While Intel SGX provides security guarantees against direct memory attacks, it does not provide systematic protection mechanisms against side channel attacks
	such as page table-based~\cite{Xu15ControlledChannel, hid-sgx-sidechannel-usenix17}, cache-based~\cite{cache-based-attack}, and branch-prediction-based~\cite{shadow-branch-lee-usenix17}.
	Through page table and cache attacks, a privileged attacker can
	observe cache-line-granular (i.e., 64B) memory access patterns from the enclave program. On the other hand, the branch-prediction attack
	can potentially leak all the control-flow taken by the enclave program.

	\item \textbf{Enclave Page Cache Limit:} The size of the Enclave Page Cache (EPC) is limited to around $96$MB~\cite{Arnautov-epc}. Although Intel SGX alleviates this limitation by supporting page-swapping between trusted memory region and untrusted memory region, this operation is expensive due to encryption and integrity verification \cite{sgx-explained,Arnautov-epc}. 

	\item \textbf{System Calls:} Intel SGX programs are restricted to ring-3 privileges and therefore rely on the untrusted OS for ring-0
	operations such as file and network I/O. 
	There are various previous works which try to solve this problem using library OSes~\cite{lib-oses} 
    and/or other techniques~\cite{thang-hoang:posup-popets}. 
\end{compactitem}

\noindent
\textbf{Oblivious Operations inside the Enclave.}
	Several techniques~\cite{racoon,thang-hoang:posup-popets,SasyGF18-zero-trace,oblivious-olga-ohrimenko} have been introduced to mitigate side-channel attacks on the SGX. 
	In this work, we built our system based on the implementations of both $\mathsf{Zerotrace}$~\cite{SasyGF18-zero-trace} and $\mathsf{Obliviate}$~\cite{ndss-AhmadKSL18}.
	Therefore, our system inherited standard secure operations from both of these libraries. 
    In particular, their implementations use an oblivious access wrapper
    by using the x86 instruction \cmov as introduced by Raccoon~\cite{racoon}.
    Using \cmov, the wrapper accesses every single byte of a memory object
    while reading or writing only the required bytes in memory.
    From the perspective of an attacker (which can only observe
    access-patterns), this is the same as reading or modifying
    every byte in memory.
	We refer readers to \cite{SasyGF18-zero-trace,ndss-AhmadKSL18, racoon} 
	for detailed description of these oblivious operations.

\subsection{Oblivious Random Access Memory} 
\label{sub:oblivious_ram}
Oblivious Random Access Memory (ORAM) was first introduced by Goldreich et al~\cite{Goldreich:1987} for software protection against piracy. The core of ORAM is to hide the access patterns resulted by reading and writing accesses on encrypted data. The security of ORAM can be described as follows.

\begin{definition}~\cite{Stefanov:2013} 
	Let
	$\stackrel{\rightarrow}{y}=(\mathsf{op_i,bid_i,data_i})_{i\in [n]}$
	denote a sequence of accesses 
	where $\mathsf{op_i}\in \{read,write\}$, 
	$\mathsf{bid_i}$ is the identifier, 
	and $\mathsf{data_i}$ denotes the data being written. 
	For an ORAM scheme $\Sigma$, let $\mathsf{Access}_{\Sigma}(\stackrel{\rightarrow}{y})$ denote a sequence of physical accesses pattern on encrypted data produced by $\stackrel{\rightarrow}{y}$.
	We say:
	\begin{inparaenum}[(a)]
	  	\item The scheme $\Sigma$ is secure if for any two sequences of accesses $\stackrel{\rightarrow}{x}$ and $\stackrel{\rightarrow}{y}$ of the same length, $\PAccess{\ORequest{x}}$ and $\PAccess{\ORequest{y}}$ are computationally indistinguishable.
	  	\item The scheme $\Sigma$ is correct if it returns on input $\ORequest{y}$ data that is consistent with $\ORequest{y}$ with probability $\geq 1 - \mathsf{negl}(\lvert\ORequest{y}\rvert)$ i.e {negligible in $\lvert\ORequest{y}\rvert$}
  	\end{inparaenum}
\end{definition}

\noindent
\textbf{Tree-based ORAM schemes. } One strategy of designing an ORAM scheme is to follow the tree paradigm proposed by Shi et al.~\cite{Elaine-rORAM} and Stefanov et al.~\cite{Stefanov:2013}. 
In tree based ORAM, the client encrypts their database into $N$ different encrypted data blocks and obliviously stores those data blocks in a binary tree of height $\lceil \log_2(N) \rceil$. 
Each node in the tree is called a \textit{bucket}, and each \textit{bucket} can contain up to $Z$ blocks. 
The client also maintains a \textit{position map}, to indicate which path a data block resides on. 
Finally, the client needs to have a \textit{stash} to store a path retrieved from the server.

We follow the same generalization of a tree-based ORAM access described in~\cite{thang-hoang:posup-popets}. 
Each access in both ORAM schemes requires two operations: a $\mathsf{ReadPath}$ operation and an $\mathsf{Evict}$ operation. 
Intuitively, $\mathsf{ReadPath}$ takes as input the ORAM block identifier, $\mathsf{bid}$, accesses the position map, and retrieves the path that block $\mathsf{bid}$ resides onto the stash, $S$. 
After performing ORAM access (i.e. $\mathsf{read/write}$) on the identified block, the block is assigned to a different path and pushed back to the tree via the $\mathsf{Evict}$ operation.
In general, the $\mathsf{Evict}$ operation takes a stash and the assigned path as input, writes back blocks from stash to the assigned path, and update the position map. 
Figure~\ref{fig:oram-access} gives an overview of how tree-based ORAM access works.

\begin{figure}[t]
	\centering
	\begin{minipage}{\linewidth}
	\begin{algorithm}[H]
	  	\caption{$ORAM.\mathsf{Access}(\mathsf{op,bid,data^*})$}
	  	\begin{algorithmic}[1]
			\State $S \leftarrow \mathsf{ReadPath}(\mathsf{bid})$ 
			{\color{blue}{//scan the whole stash}}
			\State $\mathsf{data}\leftarrow$ block $\mathsf{bid}$ from $S$ 
			\If{$\mathsf{op} = write$}
			\State $S$ $\leftarrow$ $(S$$ - \mathsf{\{(bid,data)\})\cup \{(bid,data^*)\}}$
			\EndIf
			\State $p' \stackrel{\$}{\leftarrow} \{0, \dots, N-1\}$        {\color{blue}{//the random eviction path is selected}}
			\State $S\leftarrow \mathsf{Evict}(S,p')$ 
			\State \Return $\mathsf{data}$
		\end{algorithmic}	
	\end{algorithm}
	\end{minipage}
	\caption{a standard tree-based ORAM \textit{read/write} access}
	\label{fig:oram-access}
\end{figure}
\noindent
\textbf{\pathORAM/\circuitORAM scheme.}
In this work, we consider two popular tree-based constructions of ORAM which are $\pathORAM$~\cite{Stefanov:2013} and $\circuitORAM$~\cite{wang-circuit-oram-2015}. 
While $\pathORAM$ offers simple $\mathsf{ReadPath}$ and $\mathsf{Evict}$ operations, $\circuitORAM$ offers a smaller circuit complexity for the $\mathsf{Evict}$ procedure. 
Thus, $\circuitORAM$ is more efficient when implemented with Intel SGX. 
As noted in \cite{SasyGF18-zero-trace,thang-hoang:posup-popets,wang-circuit-oram-2015}, $\circuitORAM$ can operate with $Z=2$ compared to $Z=4$ as in $\pathORAM$; therefore, the server storage overhead is significantly reduced. Moreover, the size of \textit{stash} in $\circuitORAM$ is smaller compared to the size of \textit{stash} in $\pathORAM$; this allows a more efficient performance when scanning the stash as one needs to scan the whole path and stash to avoid side-channel leakage.
\noindent
\noindent
\textbf{Recursive ORAM.} 
In a non-recursive tree-based ORAM setting, the client has to store a position map of the size $O(N)$ bits. 
This approach, however, is not suitable for a resource-constrained client. 
Stefanov et. al~\cite{Stefanov:2013} presented a technique that reduces the size of the position map to $O(1)$. 
The main idea of those constructions is to store a position map as another ORAM tree in the server side, and the client only keeps the position map of the new ORAM.
The client keeps compressing the position map into another ORAM tree until the size of the position map is small enough to be saved on the client's storage. 
One main drawback of those constructions is the increased cost in the communication between a client and the server. 
Fortunately, in our setting, this cost can be safely ignored because the communication between client and server becomes the I/O access between TEE and the random access memory.
Thus, it is more reasonable to use recursive constructions because it reduces the memory stored in the trusted region (e.g., Processor Reserved Memory).

\subsection{Blockchain}
\label{sub:btc}
The Bitcoin blockchain is a distributed data structure maintained by a group of nodes. 
In this work, to simplify the structure of the Bitcoin blockchain, we denote the network as a single party that maintains a growing database of Bitcoin blocks.
On average of $10$ minutes, the network outputs a Bitcoin block which is a combination of Bitcoin transactions and a block header.
Each block header contains relevant information about the Bitcoin block such as Merkle root, nonce, network difficulty.
The Merkle root can be used to verify the membership of Bitcoin transactions, 
and the nonce and difficulty are used to check the proof of work.
Each Bitcoin transaction contains a set of inputs and outputs where transaction inputs are unused outputs of previous transactions.
	
\begin{asparaitem}
	\item\textbf{Unspent Transaction Output Database.}	
		In the Bitcoin network, the balance of a Bitcoin address is determined by values of those outputs that have not been used in other transactions.
		These outputs are called Unspent Transaction Outputs (UTXO).
		Moreover, in the implementation of common Bitcoin nodes such as Bitcoin core \cite{bitcoin-core}, 
		Bitcoin nodes maintain a separate database that keeps track of all unspent transaction outputs and other metadata of the Bitcoin blockchain. 
		This database is known as the UTXO set.
		Intuitively, a client with the knowledge of the secret key and the commitment value can query the UTXO set directly to obtain essential information such as transaction hash, position, and value to form new valid transactions.
		Therefore, in this work, we realize that if a full node can securely update and maintain the integrity of the UTXO set via while provides SPV clients with oblivious accesses to the UTXO set, the privacy of the SPV client is preserved.

	\item \textbf{Bitcoin transaction types.}
	In the Bitcoin blockchain, transactions are classified based on the structure of the input and output scripts. 
	In particular, there are five types of standard script templates which are \textit{Pay-to-Pubkey} (P2PK), \textit{Pay-to-PubkeyHash} (P2PKH), \textit{Pay-to-ScriptHash} (P2SH), \textit{Multisig}, and \textit{Nulldata}. 
	Intuitively, scripting in Bitcoin provides a way to prove the ownership of the coins. 
	In particular, a challenge script (\textit{scriptPubkey}) is included as a part of the transaction output to specify the condition for its redemption, and a response script (\textit{scriptSig}) is part of the transaction input to reveal the condition needed to redeem the bitcoins from other output. 

	In this work, we only consider two types of transaction: \textit{Pay-to-PubkeyHash} (P2PKH) transaction and $\textit{Pay-to-ScriptHash}$ (P2SH) transaction.
	According to \cite{analysis-of-utxo,mohsen-r3c3}, these two types of transaction make up of $97\text{-}99\%$ of the UTXO set. 
	Also, one can assume that the \textit{Pay-to-Pubkey-Hash} transaction is one variant of the \textit{Pay-to-Script-Hash} transaction 
	because both transaction types require the spender's knowledge of the preimage of the hash digest before being able to spend those outputs. 
	For simplicity, from this point on, we assume that the only information needed to obtain the unspent output is the public key hash, $pkh$. 
	Moreover, all other transaction types such as \textit{Multisig} and P2PK can be easily supported in the future.
	
	\item \textbf{Block creation interval. } The block creation time in Bitcoin is the time that the network takes to generate a new block, and block creation time is specified to be 10 minutes on average by the network. 
	We call the waiting period between the most recent block and a new block, \textit{block creation interval}. 
	In this work, we discretize time as \textit{block creation intervals}. 
\end{asparaitem}


\subsection{Threat Model}
\label{sub:Threat_Model}
We assume that SPV \textit{clients} are honest and rational which means that before during the \textit{block creation interval},
a SPV \textit{client} should not request the server for transaction outputs of a same public key hash more than once. 
The underlying remote attestation service provided by Intel is secure and trusted. 
The local attestation between enclaves is secure. 
The server and its programs are assumed to be untrusted except for programs running within an enclave. 

We assume that the adversary who controls the operating system can read/inject/modify encrypted messages sent by enclaves. 
The adversary also can observe memory access patterns of both trusted and untrusted memory. Also, the computation power of the adversary is assumed to be limited. 
In particular, during \textit{block creation interval}, the adversary should not have enough computation power to forge a new Bitcoin block that satisfies the current Bitcoin network difficulty. As the time of writing, the network difficulty~\cite{bitcoin-difficulty} is around $6\times 10^9$; 
therefore, the expected number of hashes to mine a Bitcoin block is roughly $2^{72}$.  

The server's attacks on availability are out of scope.
More specifically, denial of service (DoS) attacks by system admin and untrusted operating system are out of the scope.
Otherwise, such adversary can prevent the enclaves from receiving new bitcoin block by shutting down the communication channel between the enclave and the Bitcoin network as the enclave has to rely on the untrusted OS to perform system calls such as file and network I/O. 
\section{Proposed System} 
\label{sec:new-protocol}

In this section, we first describe how \SystemName stores the UTXO set by exploring different mappings between the unspent transaction outputs and the ORAM blocks.
We see that naive mapping may not be secure for blockchain applications as it may lead to denial of services attacks. 
Next, we demonstrate how Intel SGX can be considered as a trusted execution unit to access ORAM and perform read/write operations in an oblivious manner. 
Finally, we will describe how the system handles clients' requests during a write operation.

\subsection{Storage Structure of the UTXO set}
\label{subsec: Oblivious Storage of the UTXO set}
In the first step, we show how to map the public key hash to ORAM block identification and then describe the storage requirements in \SystemName. 

\subsubsection{Bitcoin unspent transaction output mapping}
\label{subsubsec: btcintoORAM}

In the design of \sys, we assume that the SPV clients only know his/her addresses (i.e., the public key hashes); 
therefore, to return the outputs belonging to the client's address, the enclave needs to know the mapping between the address and the ORAM block identification. 


In this work, we propose two simple mappings to store unspent outputs in the ORAM tree. 
More specifically, both approaches use standard cryptographic hash functions along with a secret key generated by the enclave.
The first approach is to map a single Bitcoin address into a single ORAM block, 
and the second approach is to map a Bitcoin address into multiple ORAM blocks. 
We will later explain the trade-off between these two approaches. 
Intuitively, the first approach is more efficient in terms of performance and can be more expensive in terms of storage overhead. 
The second approach gives some flexibility in terms of storage overhead; however, to offer strong privacy to every address, this approach can be more expensive in terms of performance because it may incur more ORAM calls. 

\noindent
\paragraph{Single address into Single ORAM block.}
\label{par:key-hash-with-oram}
In this design, during the initialization, we require the program inside the enclave to use a keyed hash function to map the public key hash to ORAM block identification. 
The secret key of the hash function is generated and known only by the enclave.
In other words, the mapping between a Bitcoin address to an ORAM block identification is known only to the SGX.
%
We define the mapping as follow:
\begin{compactitem}
	\item $\mathsf{bid}\leftarrow \mathsf{OBlockMap}({pkh}, k_b)$: the function takes as input a $20$-bytes hash digest ${pkh}$ and a secret key $k_b$, it outputs the block identification number $\mathsf{bid}\in \{0,\dots,N-1\}$. 
\end{compactitem}

The key-hashing approach offers some flexibility when deciding the size of an ORAM blocks and the size of height of the ORAM tree. 
These two factors affect the size of the position map (resp. number of recursive levels) for non-recursive (resp. recursive) ORAM constructions. 	
However, since the output domain of $\mathsf{OBlockMap}(\cdot, \cdot)$ is limited to the size of the ORAM blocks, there will exist collisions. 
The following claim gives us a loose upper bound on the number of addresses that should be stored inside an ORAM block.
\begin{claim}(Addresses per ORAM block)\label{claim:addressesperoramblock}
	Let $m$ be the number of public key hashes,  $N$ be the number of ORAM blocks. If the $\mathsf{OBlockMap}()$ acts as a truly random function, then the maximum number of addresses in each ORAM block is smaller than $e\cdot m/N$ with a probability $1-1/N$.
\end{claim}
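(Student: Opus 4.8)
The plan is to recognize this as a standard balls-into-bins problem. Since $\mathsf{OBlockMap}(\cdot, k_b)$ is assumed to behave as a truly random function and the $m$ public key hashes are distinct, I would model the block identifiers $\mathsf{OBlockMap}({pkh}, k_b)$ as $m$ independent samples drawn uniformly from $\{0,\dots,N-1\}$. Then each of the $m$ addresses (``balls'') lands in each of the $N$ ORAM blocks (``bins'') independently with probability $1/N$, and the load $X_j$ of block $j$ is distributed as $\mathrm{Bin}(m, 1/N)$ with mean $\mu := m/N$. The quantity to control is the maximum load $X_{\max} = \max_j X_j$.

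First I would bound the overflow probability of a single fixed block. Applying the multiplicative Chernoff bound $\Pr[X_j \geq (1+\delta)\mu] \leq \bigl(e^{\delta}/(1+\delta)^{1+\delta}\bigr)^{\mu}$ at the threshold $(1+\delta)\mu = e\mu$ (that is, $1+\delta = e$), the numerator $e^{e-1}$ and denominator $e^{e}$ collapse to give the clean estimate $\Pr[X_j \geq e\mu] \leq e^{-\mu} = e^{-m/N}$. Equivalently, one reaches the same bound through the Poisson tail estimate $\Pr[X_j \geq k] \leq e^{-\mu}(e\mu/k)^{k}$ evaluated at $k = e\mu$, where the factor $(e\mu/k)^{k}$ is exactly $1$. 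This is the technical heart of the argument and the reason the constant $e$ appears in the statement: it is precisely the multiplier at which the Chernoff exponent simplifies to $-\mu$.

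Next I would take a union bound over the $N$ blocks, giving $\Pr[X_{\max} \geq e\mu] \leq N \cdot e^{-m/N}$. To conclude that the maximum load is below $e \cdot m/N$ with probability at least $1 - 1/N$, it then suffices that $N \cdot e^{-m/N} \leq 1/N$, i.e. $m/N \geq 2\ln N$. This inequality holds comfortably in the intended deployment regime, where the UTXO set packs many addresses into each block ($m \gg N$), which is exactly the loose regime the claim targets.

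The point requiring care -- and the main obstacle to a fully unconditional statement -- is that the $1 - 1/N$ guarantee is not free: in the sparse regime $m \approx N$ the true maximum load is $\Theta(\log N / \log\log N)$, which far exceeds $e \cdot m/N = e$, so the bound can only hold once $\mu = m/N$ is at least logarithmic in $N$. I would therefore make the parameter assumption $m/N \geq 2\ln N$ explicit (or absorb it into the ``loose'' qualifier in the statement), and note that the independence used in both the Chernoff step and the union step is inherited directly from the truly-random-function modeling of $\mathsf{OBlockMap}$. Everything remaining is a routine substitution.
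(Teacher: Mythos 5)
Your derivation is correct, and it is worth noting that it supplies an argument the paper itself never gives: the paper's ``proof'' of this claim is a single sentence declaring it ``a standard max-load analysis,'' deferring entirely to a cited balls-into-bins reference and remarking that a tighter bound exists. Your route---model the loads as $X_j \sim \mathrm{Bin}(m,1/N)$, apply the multiplicative Chernoff bound at the threshold $(1+\delta)\mu = e\mu$ so that $e^{(e-1)\mu}/e^{e\mu} = e^{-\mu}$, then union-bound over the $N$ blocks to get $\Pr[X_{\max} \ge e\mu] \le N e^{-m/N}$---is exactly the standard analysis behind that citation, so in substance you have written out the proof the paper omits. The genuinely valuable divergence is your explicit hypothesis $m/N \ge 2\ln N$: you are right that some such condition is necessary, since the claim as literally stated is false in the sparse regime (for $m \approx N$ the maximum load is $\Theta(\log N/\log\log N)$, which exceeds $e\cdot m/N = e$), and the paper buries this entirely under the word ``loose.''

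One caveat on your closing remark that the condition ``holds comfortably'' in the intended deployment: it does not for the larger trees the paper actually evaluates. With $m$ equal to the number of addresses (tens of millions, derived from the $\approx 5.8\times 10^7$ UTXOs) and $N$ ranging up to $2^{24}$, the ratio $\mu = m/N$ is single-digit to low-double-digit for $N \in \{2^{22}, 2^{23}, 2^{24}\}$, well below $2\ln N \approx 30$--$33$, so the bound $N e^{-\mu}$ is vacuous there; only around $N = 2^{20}$ does your sufficient condition hold. In the small-$\mu$ regime one needs a threshold of the form $\mu + O(\log N)$ rather than $e\mu$ to get failure probability $1/N$. This is a defect of the paper's claim as applied to its own parameters rather than of your proof, but you should either weaken ``comfortably'' or, as you already suggest, promote $m/N \ge 2\ln N$ to an explicit assumption of the claim.
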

\begin{proof}
	This is a standard max-load analysis. 
	We refer readers to \cite{czumaj-bin-ball} for detailed analysis.
	We note that there exists a tighter bound, but we use $e\cdot m/N$ bounds to simplify the equation. 
\end{proof}

Thus, if we limit each ORAM block to contain the outputs of at most $e\cdot m/N$ addresses, then the probability that every address is included is at least $1-1/N$. \Cref{fig:single-add-to-single-block} gives us a high level overview of this approach.
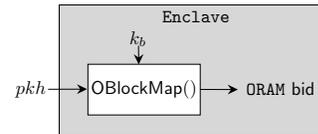
\begin{figure}[h!]
	\centering
	\resizebox{.50\linewidth}{!}{
\tikzset{every picture/.style={line width=0.75pt}} 

\begin{tikzpicture}[x=0.75pt,y=0.75pt,yscale=-1,xscale=1]

\draw  [fill={rgb, 255:red, 215; green, 215; blue, 215 }  ,fill opacity=1 ] (140,61) -- (418.5,61) -- (418.5,199) -- (140,199) -- cycle ;
\draw    (128.42,149.13) -- (167.17,149.01) ;
\draw [shift={(169.17,149)}, rotate = 539.8199999999999] [fill={rgb, 255:red, 0; green, 0; blue, 0 }  ][line width=0.75]  [draw opacity=0] (10.72,-5.15) -- (0,0) -- (10.72,5.15) -- (7.12,0) -- cycle    ;

\draw  [fill={rgb, 255:red, 255; green, 255; blue, 255 }  ,fill opacity=1 ] (170,123) -- (287.33,123) -- (287.33,176) -- (170,176) -- cycle ;
\draw    (222.83,103.67) -- (223.06,121.17) ;
\draw [shift={(223.08,123.17)}, rotate = 269.27] [fill={rgb, 255:red, 0; green, 0; blue, 0 }  ][line width=0.75]  [draw opacity=0] (10.72,-5.15) -- (0,0) -- (10.72,5.15) -- (7.12,0) -- cycle    ;

\draw    (287.33,149.33) -- (324.67,149.44) ;
\draw [shift={(326.67,149.44)}, rotate = 180.16] [fill={rgb, 255:red, 0; green, 0; blue, 0 }  ][line width=0.75]  [draw opacity=0] (10.72,-5.15) -- (0,0) -- (10.72,5.15) -- (7.12,0) -- cycle    ;

\draw (283,73) node [scale=1.44] [align=left] {$\displaystyle \mathtt{Enclave}$};
\draw (110.33,149.33) node [scale=1.44] [align=left] {$\displaystyle pkh$};
\draw (222.33,96.33) node [scale=1.44]  {$k_{b}$};
\draw (228.67,149.5) node [scale=1.44] [align=left] {$\displaystyle \mathsf{OBlockMap()}$};
\draw (371.33,148.67) node [scale=1.44] [align=left] {$\displaystyle \mathtt{ORAM}$ $\displaystyle \mathsf{bid}$};

\end{tikzpicture}
	}
	\caption{Single address into Single ORAM block}
	\label{fig:single-add-to-single-block}
\end{figure}


\noindent\textbf{Single address into Many ORAM blocks.}
Mapping a single address into a single ORAM block incurs less work on the server as it requires a single ORAM access for an address.
However, if we want to allow each address to have more than one output, 
using the first approach implies that the storage overhead increase linearly 
because the first approach distribute unspent outputs based on its addresses. 
Therefore, we have to pad dummy data for those addresses that contain $1$ outputs. 
Thus, we need a different mapping without linear increasing in storage overhead. 
To fix this shortcoming, the system needs to assign unspent outputs into ORAM block uniformly.  
One method is to allow a client to specify the number of ORAM accesses to obtain all of its unspent outputs as long as the number of requests does not exceed certain threshold. 
We define the mapping as follows:
\begin{compactitem}
	\item $ \{bid_i\}_{i \in \{0,\dots, \delta-1 \}} \leftarrow \mathsf{OBlockMap}(pkh, k_b, \delta)$: the function takes as input a $20$-bytes hash digest ${pkh}$, a secret key $k_b$, and a number $\delta$ where the maximum value of $\delta$ is specified by the system. It outputs a set of block identification numbers $ \{bid_i\}_{i \in \{0,\dots, \delta-1 \}} \subseteq \{0,\dots,N-1\}$.  
\end{compactitem}
This approach also introduces some leakage as some addresses may contain more unspent outputs than others. 
Alternatively, the system can fix the value of $\delta$ ORAM accesses for all addresses with the expense of performance (i.e., one address incurs constant ORAM accesses). Similarly, the storage overhead of \sys can be computed using the following claim:
\begin{claim}(UTXO per ORAM block)\label{claim:utxoperoramblock}
	Let $m$ be the number of unspent outputs, $N$ be the number of ORAM blocks. If the $\mathsf{OBlockMap}$ acts as a truly random function, then the maximum number of outputs in each ORAM block is smaller than $e\cdot m/N$ with probability at least $1 - 1/N$
\end{claim}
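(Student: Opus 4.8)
The plan is to recognize this as a standard balls-into-bins (max-load) problem and to bound the maximum bin occupancy by a Chernoff tail bound followed by a union bound. Since $\mathsf{OBlockMap}$ is assumed to behave as a truly random function, I would model the $m$ unspent outputs as $m$ balls thrown independently and uniformly into the $N$ ORAM blocks (bins), so that each output lands in a given block with probability $1/N$. Let $X_j$ denote the number of outputs assigned to block $j$. Then $X_j$ is distributed as $\mathrm{Bin}(m, 1/N)$ with mean $\mu = \mathbb{E}[X_j] = m/N$, which is exactly the quantity appearing in the claimed threshold $e \cdot m/N = e\mu$.

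First I would establish a per-block tail bound precisely at the threshold $e\mu$. Using the multiplicative Chernoff bound $\Pr[X_j \geq (1+\delta)\mu] \leq \left(e^{\delta}/(1+\delta)^{1+\delta}\right)^{\mu}$ and specializing to $1+\delta = e$ (so $\delta = e-1$), the denominator $(1+\delta)^{1+\delta} = e^{e}$ cancels the numerator $e^{\delta} = e^{e-1}$ down to $\Pr[X_j \geq e\mu] \leq e^{-\mu} = e^{-m/N}$. I would then apply a union bound over all $N$ blocks to control the maximum load, obtaining $\Pr[\max_j X_j \geq e\mu] \leq N\, e^{-m/N}$. In the heavily-loaded operating regime of the UTXO set (where $m/N \geq 2\ln N$, i.e.\ each block receives many outputs in expectation), this is at most $1/N$, so $\max_j X_j < e \cdot m/N$ holds with probability at least $1 - 1/N$, as claimed.

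I expect the main obstacle to be extracting a \emph{nonvacuous} per-block bound at the threshold $e\mu$. The cruder estimate $\Pr[X_j \geq k] \leq \binom{m}{k} N^{-k} \leq \left(em/(kN)\right)^{k}$ collapses to exactly $1$ when $k = e\mu$, and so says nothing there; the saving factor $e^{-\mu}$ is precisely what the sharp multiplicative Chernoff form retains (indeed the Chernoff bound equals $e^{-\mu}\left(em/(kN)\right)^{k}$). Securing this factor, and then identifying the regime in which the union bound over the $N$ blocks still closes to $1 - 1/N$, is the crux; everything else is routine. I note finally that this statement is identical in form to \Cref{claim:addressesperoramblock}, with unspent outputs playing the role of addresses, so the same analysis (as in \cite{czumaj-bin-ball}) applies verbatim, and as remarked there a tighter threshold than $e\cdot m/N$ is available at the cost of a less clean expression.
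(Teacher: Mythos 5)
Your proof is correct and is precisely the standard max-load analysis (multiplicative Chernoff at threshold $(1+\delta)\mu$ with $1+\delta=e$, giving a per-bin tail of $e^{-m/N}$, then a union bound over the $N$ bins) that the paper itself does not spell out but simply delegates to \cite{czumaj-bin-ball} via the remark that this proof is identical to that of Claim~\ref{claim:addressesperoramblock}. One caveat worth flagging: your argument---rightly---requires the heavily-loaded hypothesis $m/N \geq 2\ln N$ for the union bound to close, a condition absent from the claim as stated and in fact necessary, since the unconditional statement fails in general (for $m=N$ the max load is $\Theta(\log N/\log\log N)\gg e$) and is even strained by some of the paper's own parameter choices (e.g.\ $N=2^{24}$ with $m\approx 5.8\times 10^{7}$ gives $m/N\approx 3.5$, well below $2\ln N\approx 33$), so your conditional version is the defensible form of the claim.
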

The proof is identical to proof of claim~\ref{claim:addressesperoramblock}. 

Figure~\ref{fig:single-add-to-many-block-fixed-not-fixed} offers an overview of the both approaches.
\begin{figure}[h]
	\centering
	\resizebox{\linewidth}{!}{
\tikzset{every picture/.style={line width=0.75pt}} 

\begin{tikzpicture}[x=0.75pt,y=0.75pt,yscale=-1,xscale=1]

\draw  [fill={rgb, 255:red, 215; green, 215; blue, 215 }  ,fill opacity=1 ] (55,104) -- (333.5,104) -- (333.5,242) -- (55,242) -- cycle ;
\draw    (44.75,178.13) -- (83.5,178.01) ;
\draw [shift={(85.5,178)}, rotate = 539.8199999999999] [fill={rgb, 255:red, 0; green, 0; blue, 0 }  ][line width=0.75]  [draw opacity=0] (10.72,-5.15) -- (0,0) -- (10.72,5.15) -- (7.12,0) -- cycle    ;

\draw  [fill={rgb, 255:red, 255; green, 255; blue, 255 }  ,fill opacity=1 ] (85,166) -- (202.33,166) -- (202.33,219) -- (85,219) -- cycle ;
\draw    (113.17,146) -- (113.39,163.5) ;
\draw [shift={(113.42,165.5)}, rotate = 269.27] [fill={rgb, 255:red, 0; green, 0; blue, 0 }  ][line width=0.75]  [draw opacity=0] (10.72,-5.15) -- (0,0) -- (10.72,5.15) -- (7.12,0) -- cycle    ;

\draw    (202.33,192.33) -- (238.66,176.79) ;
\draw [shift={(240.5,176)}, rotate = 516.8299999999999] [fill={rgb, 255:red, 0; green, 0; blue, 0 }  ][line width=0.75]  [draw opacity=0] (10.72,-5.15) -- (0,0) -- (10.72,5.15) -- (7.12,0) -- cycle    ;

\draw    (202.33,192.33) -- (237.56,201.5) ;
\draw [shift={(239.5,202)}, rotate = 194.58] [fill={rgb, 255:red, 0; green, 0; blue, 0 }  ][line width=0.75]  [draw opacity=0] (10.72,-5.15) -- (0,0) -- (10.72,5.15) -- (7.12,0) -- cycle    ;

\draw    (44.5,203) -- (83.5,203) ;
\draw [shift={(85.5,203)}, rotate = 180] [fill={rgb, 255:red, 0; green, 0; blue, 0 }  ][line width=0.75]  [draw opacity=0] (10.72,-5.15) -- (0,0) -- (10.72,5.15) -- (7.12,0) -- cycle    ;

\draw  [fill={rgb, 255:red, 215; green, 215; blue, 215 }  ,fill opacity=1 ] (376,104) -- (654.5,104) -- (654.5,242) -- (376,242) -- cycle ;
\draw  [fill={rgb, 255:red, 255; green, 255; blue, 255 }  ,fill opacity=1 ] (406,166) -- (523.33,166) -- (523.33,219) -- (406,219) -- cycle ;
\draw    (523.33,192.33) -- (559.66,176.79) ;
\draw [shift={(561.5,176)}, rotate = 516.8299999999999] [fill={rgb, 255:red, 0; green, 0; blue, 0 }  ][line width=0.75]  [draw opacity=0] (10.72,-5.15) -- (0,0) -- (10.72,5.15) -- (7.12,0) -- cycle    ;

\draw    (523.33,192.33) -- (558.56,201.5) ;
\draw [shift={(560.5,202)}, rotate = 194.58] [fill={rgb, 255:red, 0; green, 0; blue, 0 }  ][line width=0.75]  [draw opacity=0] (10.72,-5.15) -- (0,0) -- (10.72,5.15) -- (7.12,0) -- cycle    ;

\draw    (366.75,186.75) -- (403.5,186.99) ;
\draw [shift={(405.5,187)}, rotate = 180.37] [fill={rgb, 255:red, 0; green, 0; blue, 0 }  ][line width=0.75]  [draw opacity=0] (10.72,-5.15) -- (0,0) -- (10.72,5.15) -- (7.12,0) -- cycle    ;

\draw    (443.17,146) -- (443.39,163.5) ;
\draw [shift={(443.42,165.5)}, rotate = 269.27] [fill={rgb, 255:red, 0; green, 0; blue, 0 }  ][line width=0.75]  [draw opacity=0] (10.72,-5.15) -- (0,0) -- (10.72,5.15) -- (7.12,0) -- cycle    ;

\draw    (481.17,147) -- (481.39,164.5) ;
\draw [shift={(481.42,166.5)}, rotate = 269.27] [fill={rgb, 255:red, 0; green, 0; blue, 0 }  ][line width=0.75]  [draw opacity=0] (10.72,-5.15) -- (0,0) -- (10.72,5.15) -- (7.12,0) -- cycle    ;

\draw    (171.17,146) -- (171.39,163.5) ;
\draw [shift={(171.42,165.5)}, rotate = 269.27] [fill={rgb, 255:red, 0; green, 0; blue, 0 }  ][line width=0.75]  [draw opacity=0] (10.72,-5.15) -- (0,0) -- (10.72,5.15) -- (7.12,0) -- cycle    ;

\draw (200,114) node [scale=1.44] [align=left] {$\displaystyle \mathtt{Enclave}$};
\draw (28.67,176.33) node [scale=1.44] [align=left] {$\displaystyle pkh$};
\draw (113.67,136.67) node [scale=1.44]  {$k_{b}$};
\draw (143.67,192.5) node [scale=1.2] [align=left] {$\displaystyle \mathsf{OBlockMap()}$};
\draw (286.67,174.33) node [scale=1.44] [align=left] {$\displaystyle \mathtt{ORAM}$ $\displaystyle \mathsf{bid}$};
\draw (20.67,201.83) node [scale=1.44] [align=left] {$\displaystyle \delta =2$};
\draw (286.67,198.33) node [scale=1.44] [align=left] {$\displaystyle \mathtt{ORAM}$ $\displaystyle \mathsf{bid}$};
\draw (351.67,186.33) node [scale=1.44] [align=left] {$\displaystyle pkh$};
\draw (519,113.5) node [scale=1.44] [align=left] {$\displaystyle \mathtt{Enclave}$};
\draw (443.67,137.67) node [scale=1.44]  {$k_{b}$};
\draw (464.67,192.5) node [scale=1.2] [align=left] {$\displaystyle \mathsf{OBlockMap()}$};
\draw (606.67,174.33) node [scale=1.44] [align=left] {$\displaystyle \mathtt{ORAM}$ $\displaystyle \mathsf{bid}$};
\draw (606.67,198.33) node [scale=1.44] [align=left] {$\displaystyle \mathtt{ORAM}$ $\displaystyle \mathsf{bid}$};
\draw (484.67,136.67) node [scale=1.44]  {$\delta =2$};
\draw (173.67,134.67) node [scale=1.44]  {$max=3$};

\end{tikzpicture}
	}
	\caption{Single Address into Many ORAM blocks}
	\label{fig:single-add-to-many-block-fixed-not-fixed}
\end{figure}
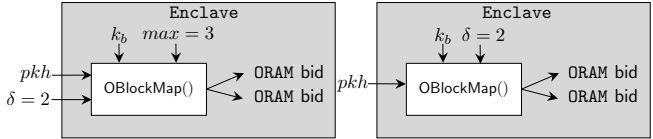
\subsubsection{Storage}\
In this system, we require the untrusted server to store three separate databases which are the \readTree, the \updateTree, and the blockheader chain. In particular,
	\textbf{\textit{Read-Once} ORAM Tree} 
	serves as a dedicated storage to handle clients' requests. 
	The structure of the tree is identical to the standard ORAM tree
	\textbf{\textit{Original} ORAM Tree} is where all standard ORAM eviction operations are performed. 
	In this work, we also require the enclave to maintain the \textbf{Bitcoin Header Chain} to verify the proof of work of the bitcoin block sent by other bitcoin client. The header chain is stored in the untrusted memory with integrity check.
\subsection{Oblivious Read and Write Protocols}
\label{sub:oRAM operations}
In \SystemName, the SPV client is the party who invokes read accesses, and the Bitcoin network is the party who invokes write accesses. 
The TEE in the server is the one that performs both of those accesses on behalf of the client and the Bitcoin network. 
\subsubsection{Server System Components}
Before explaining how oblivious read and write accesses work, we first start outlining the different components of our design.
The server is initialized with different enclaves:
\paragraph{Managing Enclave $\mathcal E_{m}$} 
coordinates other enclaves and to handle requests from the clients. 
The \textit{managing} enclave also handles the communication with other Bitcoin client or local Bitcoin client ($\mathsf{bitcoind}$) via request procedure calls (RPC) to obtain Bitcoin blocks. 
Upon receiving the Bitcoin block,  the \managingEnclave also verifies the integrity of the block using a separated header chain.
\paragraph{Reading Enclave $\mathcal E_{r}$} is a dedicated enclave initialized by the \managingEnclave. 
It has  a copy of ORAM position map and its own stash. The \readEnclave operates on the \readTree. 
Also, the \readEnclave only performs ORAM $\mathsf{ReadPath}$ operations to obtain data while ORAM $\mathsf{Eviction}$ operations will be handled by the \updateEnclave.
\paragraph{Writing Enclave $\mathcal E_{w}$} performs $\mathsf{Eviction}$ procedure for each read request, and performs ORAM writing accesses when a new Bitcoin block arrives from the Bitcoin network.

\subsubsection{Oblivious \readname Protocol}
\label{subsub:read-proc}
Here, we describe how a remote client can perform a read access on the UTXO set. 

\paragraph{Notation. } 
First, let's denote 
$K_b$ to be the block mapping key, 
$\mathsf{bid}$ to be the ORAM block identification. 
We let $(\mathsf{Enc}, \mathsf{Dec})$ denote an authenticated encryption scheme.
We assume that the the server has already been initialized with a \updateEnclave, \uEnclave and a \managingEnclave, \mEnclave. 
The \managingEnclave has a similar copy of the position map as the map in the \updateEnclave.
Figure~\ref{fig:read-overview} presents the oblivious read protocol of \SystemName. The oblivious read protocol can be described as follows:
\begin{figure}[t]
\centering
\includegraphics[width=.95\columnwidth]{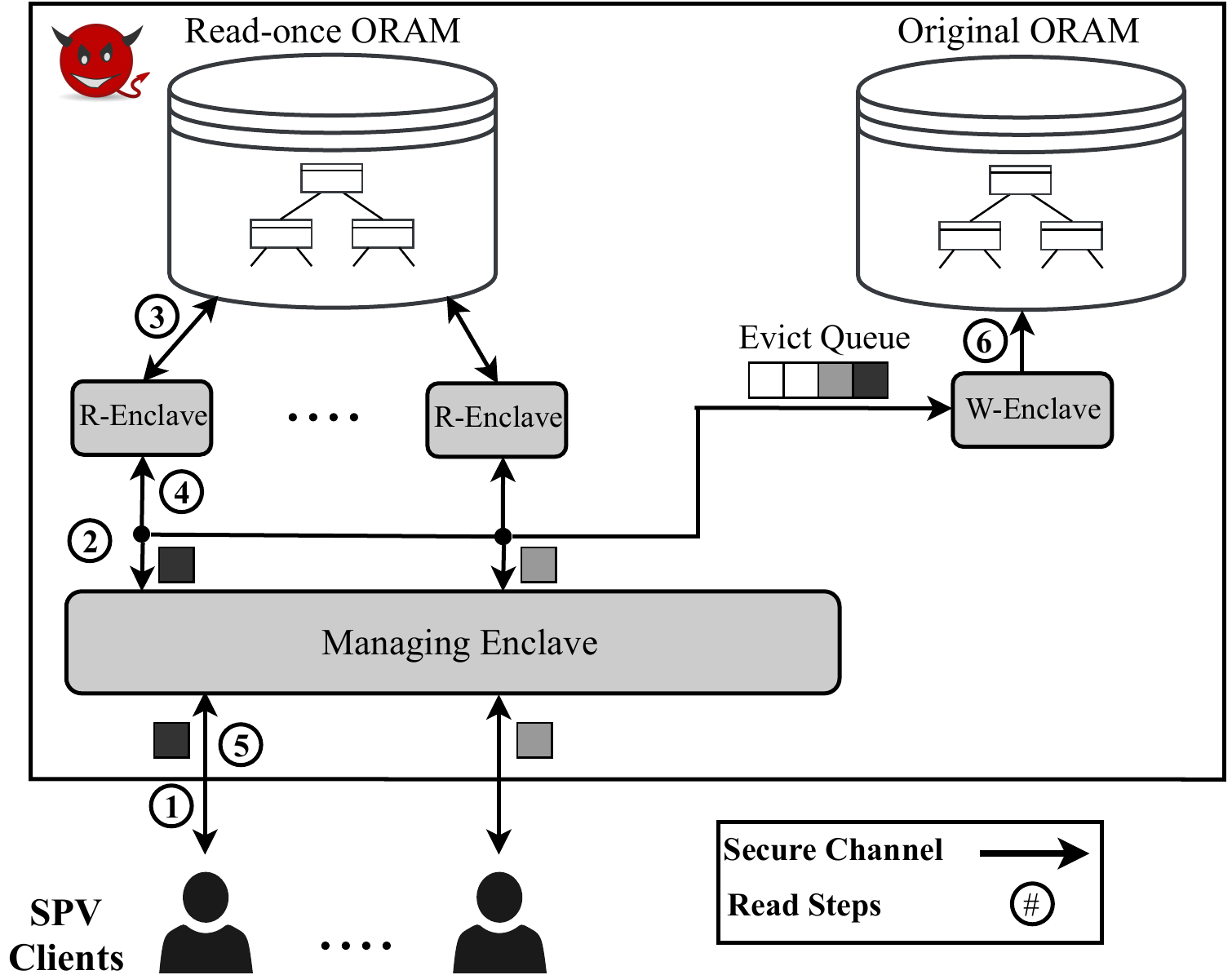}
\caption{The read protocol.
Steps \protect\bcircled{1}-\protect\bcircled{5} describes how \protect\sys receives and responds to the client, and for each request, the \updateEnclave performs the $\mathsf{Eviction}$ procedure of ORAM on the \updateTree during step \protect\bcircled{6}.
}
\label{fig:read-overview}
\end{figure}
\begin{asparaenum}
	\item \textbf{The client establishes a secure channel with the \managingEnclave}~\bcircled{1}: First, the client performs a remote attestation with the secure \textit{managing} enclave, \mEnclave , and agrees on a session key, $K_s$. 
	The client encrypts his address along with the proof of ownership of that address, and sends the encrypted query to the server to be passed to \mEnclave. For simplicity, we assume that the plaintext only contains a public key hash, $pkh$, that the client is interested in, and the proof of ownership of the $pkh$ is $\phi$, $C\leftarrow\mathsf{Enc}_{K_s}(pkh, \phi)$. 
	Note that there are different ways to prove the ownership of public key hash/addresses. 
	In Bitcoin, if the public key is never revealed before, the proof of ownership can simply be the public key (i.e. $\phi = pk$ such that $H(pk)=pkh$).
	Alternatively, the system can enforce a client to provide the signature and the public key to prove the ownership of the public key hash.

	\item \textbf{The \textit{managing} enclave initializes a \textit{reading} enclave}~\bcircled{2}~: 
	after receiving a client's request, \mEnclave initializes a dedicated \readEnclave, \rEnclave to handle the client's future requests. 
	Also, we require that the enclaves authenticate each other, and the existence of a secure channel between enclaves. 
	Moreover, the \readEnclave has its copy of the position map, its own stash, the block mapping key $K_b$, and the agreed session key $K_s$.

	\item \textbf{The \textit{managing} enclave identifies and forwards ORAM Block ID to both \textit{reading} and \textit{writing} enclaves}~\bcircled{2}: 
	After decrypting the ciphertext $(pkh, \phi) \leftarrow \mathsf{Dec}_{K_s}(C)$, \mEnclave verifies the proof $\phi$ and $pkh$,
	then uses $\mathsf{OBlockMap}(\cdot,\cdot)$~\footnote{for simplicity, we assume that the one-to-one mapping is used here} function to learn the ORAM block ID, $\mathsf{bid} \leftarrow \mathsf{OBlockMap}(pkh, K_b)$ where $K_b$ is the secret key generated by the enclave during initialization for mapping purposes. 
	After obtaining the ORAM id, $\mathsf{bid}$, the \textit{managing} enclave forwards $\mathsf{bid}$ to the \updateEnclave for the eviction procedure, 
	and forwards the $(pkh,\mathsf{bid})$ to the \readEnclave.

	\item \textbf{The \readEnclave performs \readaccessname on the \readTree}~\bcircled{3}: Based on the given $\mathsf{bid}$, the \readEnclave performs ORAM read only accesses on the ORAM tree to obtain the block. 
	If the block contains the unspent output that belongs to the public key $pkh$, 
    the \readEnclave adds outputs into the response $R$. 
	To mitigate the size leakage, the response $R$ is padded with dummy data if there is no UTXO found.
	\item \textbf{The \readEnclave responds to the Client}~\bcircled{4}-\bcircled{5}~: The enclave encrypts the response, $R$, using the session key $K_s$ then sends it to the client. 
	\item \textbf{The \updateEnclave performs the eviction procedure on the \updateTree}~\bcircled{6}: After obtaining the $\mathsf{bid}$ from the \managingEnclave, the update enclave will perform a standard ORAM read accesses on the \updateTree. 
	The goal of this procedure is to use the \evict~procedure inside standard ORAM operation to rerandomize the location of the actual block. No actual data is return in this step.
	\end{asparaenum}

\subsubsection{Oblivious Write Protocol}
\label{subsub:write-proc}
In the Bitcoin network, miners generate a new Bitcoin block on average every 10 minutes. 
When the server receives a new block from the Bitcoin network, the \managingEnclave can obtain it from the bitcoin client. 
The \mEnclave verifies the integrity of the block by computing the Merkle root from transactions, then verifying the proof of work, 
and  the \updateEnclave has to perform an update on the \updateTree; 
however, in the mean time, the system should be able to handle clients' requests during updates.
We will explain how \SystemName handles oblivious write accesses while handling clients' requests as follow:



\begin{asparaenum}
	\item \textbf{The \textit{managing} enclave verifies a new Bitcoin block}~\rcircled{1}-\rcircled{3}~: Once a bitcoin block arrives to the system from the Bitcoin network, the \managingEnclave \mEnclave can obtain it from the Bitcoin client. 
	However, since the client runs outside the enclave, the enclave needs to verify the integrity of the new block by computing the Merkle root and verifying the proof of work to make sure that the block has not been tampered by the untrusted OS. 
	For the detail of these computations, we refer readers to \cite{btc-reference}. 
	Moreover, as discussed in section~\ref{subsec: Oblivious Storage of the UTXO set}, to verify a newly arrived block, the system is required to keep a separate block headers chain with integrity check in the untrusted memory. 
	Once \mEnclave verifies the bitcoin block, \mEnclave starts pruning the transactions to obtain relevant information of the transactions' inputs and outputs. 
	Then, \mEnclave uses $\mathsf{OBlockMap}(\cdot, \cdot)$ to find the ORAM block identification to queue up ORAM write requests to the \updateEnclave. 
	During this process, the oblivious read protocol performs as normal on the \readTree. 
\begin{figure}[t]
	\centering
	\includegraphics[width=.95\columnwidth]{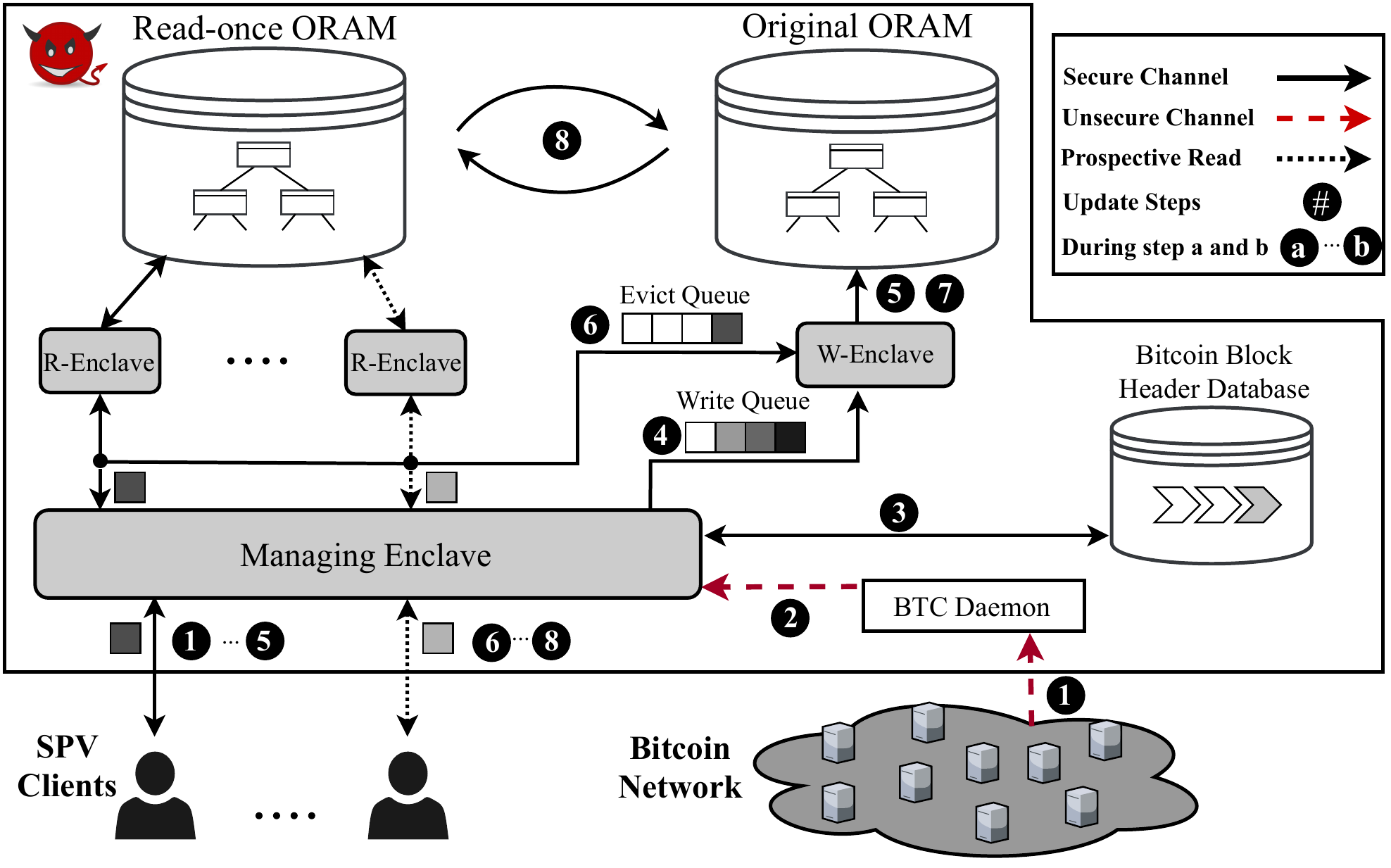}
	\caption{Oblivious write protocol. 
	During steps \protect\rcircled{1}-\protect\rcircled{5}, the \managingEnclave receives and responds to SPV client request as usual. During steps \protect\rcircled{6}-\protect\rcircled{8}, read requests from clients are queued up, and the \managingEnclave resume these requests after updating the \readTree
	}
	\label{fig:write-overview}
\end{figure}
	\item \textbf{The \textit{managing} enclave sends write requests to the \updateEnclave}~\rcircled{4}: 
		Once the pruning process completes, the \mEnclave starts sending write requests based on data extracted from the bitcoin block to the \updateEnclave, \uEnclave. 
		On otherhand, for each eviction request resulted from SPV client's requests, \mEnclave starts queuing up those eviction requests. 

	\item \textbf{The \textit{writing} enclave performs write accesses on the \updateTree}~\rcircled{4}-\rcircled{5}: 
		Upon receiving writing requests from \mEnclave, the \uEnclave performs all writing requests in the writing queue on the \updateTree. 

	\item \textbf{The \updateEnclave finishes all eviction requests queued up on the \updateTree}~\rcircled{6}-\rcircled{7}:
		Once finished updating the tree, the \uEnclave signals \mEnclave to start queuing up clients' requests and performs all eviction requests incurred by SPV clients' read requests during update interval.
		Finally, when it finishes, it signals the \mEnclave to update the \readTree and make a copy of the position map. 
	\item \textbf{The \textit{managing} enclave performs an update the \readTree and the \updateTree and enclave metadata}~\rcircled{8}: In particular, \mEnclave discards the current copy of the \readTree, and makes 2 identical copies of the most updated \updateTree. One is used as \readTree, and the other is used as \updateTree. Also, the new position map and new stash are updated for the \managingEnclave. 
	Once this process is finished, \mEnclave starts answering SPV clients' requests again.
\end{asparaenum}
Figure~\ref{fig:write-overview} gives us an overview of the oblivious write protocol.

\paragraph{Discussion}
The core idea of the oblivious update protocol is to minimize the downtime of \SystemName during the update process. More specifically, during step~\rcircled{1}-\rcircled{5}, \SystemName still allows SPV clients to query the system while from step~\rcircled{6}-\rcircled{8}, \SystemName stops accepting clients' requests in order to synchronize both trees. This approach introduces some delay; however, the system downtime is minimized to the same amount of time it takes for the \updateEnclave to finish all eviction requests.

\begin{table*}[h]
\centering
\resizebox{.9\textwidth}{!}{%
\begin{tabular}{c c|c|c|c|c|}
\cline{3-6}
    &             & \multicolumn{2}{c|}{\textbf{\sys(\pathORAM, $Z=4$)} $$}           &       \multicolumn{2}{c|}{\textbf{\sys(\circuitORAM, $Z=2$)} }     \\ \hline
 \multicolumn{1}{|c|}{$N$} &  \multicolumn{1}{|c|}{Block Size}  & \readname Access & Standard ORAM access  & \readname Access & \multicolumn{1}{c|}{Standard ORAM access} \\  \hline
 \multicolumn{1}{|c|}{$2^{20}$} &  \multicolumn{1}{|c|}{6528 bytes (96 utxos)}  & $16.34$ ms & $30.40$ ms  & $2.13$ ms   & \multicolumn{1}{c|}{$6.45$ ms} \\  \hline
 \multicolumn{1}{|c|}{$2^{21}$} &  \multicolumn{1}{|c|}{3264 bytes (48 utxos) } & $9.24$  ms & $16.58$ ms  & $1.27$ ms   & \multicolumn{1}{c|}{$3.76$ ms} \\  \hline
 \multicolumn{1}{|c|}{$2^{22}$} &  \multicolumn{1}{|c|}{2176 bytes (32 utxos)}  & $7.56$  ms & $12.42$ ms  & $1.05$ ms   & \multicolumn{1}{c|}{$2.92$ ms} \\  \hline
 \multicolumn{1}{|c|}{$2^{23}$} &  \multicolumn{1}{|c|}{1088 bytes (16 utxos)}  & $4.12$  ms & $7.78$ ms   & $0.72$ ms   & \multicolumn{1}{c|}{$2.09$ ms} \\  \hline
 \multicolumn{1}{|c|}{$2^{24}$} &  \multicolumn{1}{|c|}{544  bytes (8  utxos)}  & $2.43$  ms & $5.89$ ms   & $0.64$ ms   & \multicolumn{1}{c|}{$1.70$ ms} \\  \hline
\end{tabular}
}
\vspace{10pt}
\caption{Performance of two different types of \textsc{Path}/\circuitORAM accesses on different block size.}
\label{table:path-circuit-oram}
\end{table*}

\section{Evaluation and Comparison} 
\label{sec:Evaluation}

In this section, we describe our configuration, our experimental results, and the storage overhead of the system based on the analysis of the UTXO set on the Bitcoin blockchain. Moreover, we give a comparison between \sys and the current existing SPV solution in term of performance and communication overhead. Finally, we address the capabilities of \sys compared to other related works. 
\subsection{Configuration} 
\label{sub:configuration}
\paragraph{Software. } 
We implemented our system with C++ using Intel SGX SDK v2.0.
The implementation of the ORAM controller is built on top the $\mathsf{Zerotrace}$~\cite{SasyGF18-zero-trace} implementation. In order to handle the communication with the Bitcoin network, 
we have used \texttt{libjson-rpc-cpp}~\cite{libjson-rpc} framework to build C++ wrapper functions to communicate with the Bitcoin daemon (\texttt{bitcoind}~\cite{bitcoin-core}) from inside the enclave through JSON-RPC calls. For extracting the UTXO database, we used the \texttt{bitcoin-tool} implementation proposed in~\cite{analysis-of-utxo}. 
This allows us to save time during the initialization phase. Finally, we used \texttt{python-bitcoinlib}~\cite{pythonbitcoin} to compare the performance of \sys with the current existing SPV solution.

\paragraph{Database. } To reduce the time of initializing both ORAM trees from the \textit{genesis} block, we used $\texttt{bitcoin\text{-}tool}$ implementation proposed in~\cite{analysis-of-utxo} to extract $3.2$GB of the Bitcoin {UTXO} set in February 2019. 

We have downloaded a snapshot of the Bitcoin blockchain including block 0 to $551,731$, containing a total of $58,156,895$ Unspent Transaction Outputs (UTXO). 
\Cref{fig:utxo-analysis} shows the distribution of the unspent transaction outputs per address. 
We see that more than 90\% of the addresses have less than three UTXOs. 
In our prototype, we considered at most two UTXOs per wallet ID. 
This results in covering more than 92\% of all the UTXOs per wallet ID.
Also, as discussed in section~\ref{sec:new-protocol}, by using different mapping, one can cover more percentage of Bitcoin addresses.
\begin{figure}[b]
	\centering
	\includegraphics[width=.9\linewidth]{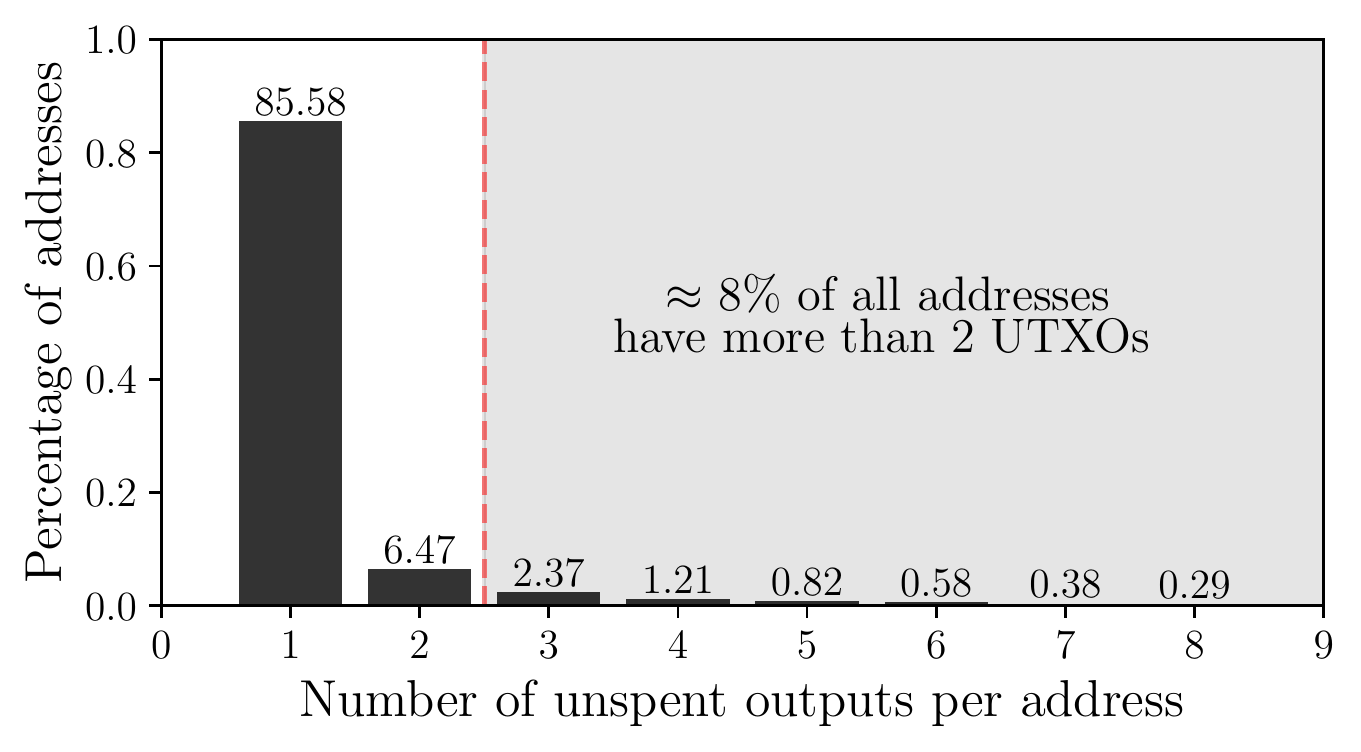}
	\caption{Number of transactions per wallet ID. By allowing each address can have up to 2 UTXO, \sys can cover approximate $92\%$ of the UTXO set.} 
	\label{fig:utxo-analysis}
\end{figure}

\paragraph{Hardware.}
We evaluated the performance of \SystemName on a desktop which is equipped with Intel(R) Xeon(R) Silver 4116 CPU @ 2.10GHz, 128GB RAM.
Since Intel(R) Xeon(R) silver 4116 is not SGX-enabled CPU, we obtain the performance results by running our implementation in the simulation mode. 
However, we expect to not have much of a performance difference when executing in the two different modes.
More specifically, we have tested the performance of \sys 
using a smaller ORAM tree 
in the hardware mode on a commodity desktop equipped with SGX-enabled Intel Core i7.
Comparing the hardware and simulation mode results (i.e., simulation on the Intel Core i7 CPU),
we see no noticeable difference in the running time of both \readname and standard ORAM accesses.

\subsection{Experimental Results} 
\label{sub:perforamce_analysis}
We have implemented \sys using multiple threads. 
As reported in \cite{multiple-thread-sgx,tramer2018slalom}, as long as the total amount of memory used by all threads does not exceed the EPC limit, 
the performance gain should be similar to the use of different enclaves. 
In this work, we implemented all functionalities in one single enclave, and we used multiple threads to concurrently accesses the ORAM trees.

\paragraph{System parameters}
We tested our system with both recursive \pathORAM~and recursive \circuitORAM~using different tree size $N=2^{20}, 2^{21}, 2^{22}, 2^{23}, 2^{24}$. 
We allow each Bitcoin address to have up to $2$ unspent transaction outputs, 
and we use the single address into single ORAM block mapping approach described in section~\ref{subsec: Oblivious Storage of the UTXO set} to map addresses into ORAM block. 
Finally,  we use claim~\ref{claim:addressesperoramblock} to determine the size of each ORAM block.

\paragraph{Performance of \readname and standard ORAM accesses.} 
In $T^3$, the \readEnclave performs \readname accesses to handle client's requests in an efficient manner. 
Table~\ref{table:path-circuit-oram} presents an overall performance of a standard ORAM access as well as the performance of a \readname access for both \circuitORAM~and \pathORAM. 
For this experiment, we took the average running time of 10000 accesses.


As shown in the results,
ORAM constructions with smaller block sizes provides a  better performance in both schemes.
The reason is that oblivious operations like oblivious comparisons and $\mathsf{cmov}$-based stash scan are more efficient because of a smaller size stash. 
Moreover, \circuitORAM gives a better performance compared to \pathORAM, as it can operate on a smaller block compared to \pathORAM, and this requires much smaller stash size
allowing much faster oblivious execution.
\begin{table}[b]
\centering
\resizebox{.95\columnwidth}{!}{%
\begin{tabular}{c|c|c|}
\cline{2-3}
                                                            & \multicolumn{1}{c|}{\multirow{2}{*}{\sys(\textbf{\pathORAM})}}           &       \multicolumn{1}{c|}{\multirow{2}{*}{\sys(\textbf{\circuitORAM})}}\\\cline{1-1}
 \multicolumn{1}{|c|}{\textbf{Number of threads}}                    &   			 &  \multicolumn{1}{c|}{}          		\\  \hline
 \multicolumn{1}{|c|}{$1$}	 								& {2.43} ms & \multicolumn{1}{c|}{0.64 ms}   \\  \hline
 \multicolumn{1}{|c|}{$2$} 									& {1.40} ms & \multicolumn{1}{c|}{0.58 ms}  	\\  \hline
 \multicolumn{1}{|c|}{$3$} 									& {0.90} ms & \multicolumn{1}{c|}{0.43 ms}  	\\  \hline 
 \multicolumn{1}{|c|}{$4$} 									& {0.73} ms & \multicolumn{1}{c|}{0.35 ms}  	\\  \hline
\end{tabular}
}
\vspace{10pt}
\captionof{table}{Performance gain of multiple-thread \readname access on Path/\circuitORAM with $N=2^{24}$ block size = $544$ bytes.}
\label{table:path-circuit-oram-multiple-threading}
\end{table}

\paragraph{Parallelization.} Since there is no race condition in the \readname accesses, the design of \SystemName allows different threads to concurrently perform \readname accesses on the \readTree. 
Compared to other oblivious system like \textsc{Bite}~\cite{SasyGF18-zero-trace}, \SystemName is able to handle bursty client read requests concurrently while the eviction requests are distributed sequentially during the \bci. 
To measure this performance gain, we used multiple threads to access the \readname enclave and perform \readname access simultaneously on a tree of size $N=2^{24}$ and ORAM block of size $544$ bytes. 
Table~\ref{table:path-circuit-oram-multiple-threading} shows the performance of \sys implemented using multiple threads for both \circuitORAM~and \pathORAM.


\paragraph{Comparison to current SPV solutions. }
We give a comparison in term of performance and communication overhead over several number of requests to the existing SPV client's solution and to BITE~\cite{bite-spv-sgx}
Oblivious database.
\begin{asparaenum}
	\item \textit{Performance}:
	\Cref{fig:performance} gives us an overview of the performance of \sys compared to the performance of the current existing SPV with Bloom filter solution and the performance of Bite Oblivious database. 
	In particular, it shows the response latency from the client's perspective. 
	In this comparison, a request for the SPV solution with Bloom filter solution means the time the server takes to scan one Bitcoin block, 
	and a request for \sys and BITE means the time it takes to perform an ORAM access on the ORAM tree. 
		For the current SPV clients with Bloom filter, we set the false positive rate of the Bloom filter to $1.0$\% and $5.0$\% respectively. 
	For \sys, we used $N=2^{24}$ and block of size 544 bytes for both \pathORAM with $Z=4$ and \circuitORAM with $Z=2$.
	For BITE database, based on our understand of their construction,  we re-implemented BITE using non-recursive construction of \pathORAM, and we used the same ORAM block of size $32$kB which leads to the number of block is $N=2^{17}$. 
	Also, we also provide an additional construction of BITE which is implemented using recursive \pathORAM and suggested parameters for \sys where the tree is of size $2^{24}$ and block of size $544$B. \Cref{fig:performance} gives us the overall performance of three existing solutions. 
	
	The performance of \sys outperforms the SPV with Bloom filter solution. 
	The reason is that in \sys, the system relies on the TEE to handle the integrity checking of the Bitcoin block before updating the ORAM tree while in the current SPV solution, the full client needs to scan the Bloom filter every time and detect the relevant transactions and recompute the Merkle path for each of those transactions. 
	
	Also, \sys performs much better than BITE oblivious database as the BITE system does not consider the use of recursive ORAM construction. 
	Another reason is that the size of the ORAM block used in Bite is large; hence, the cost of oblivious operation like $\mathsf{cmov}$-based stash scan becomes more expensive.
	Thus, we envision and realize an improved construction of BITE using recursive construction of \pathORAM to demonstrate the practical impact of using recursive ORAM construction on TEE with restricted memories. 
	\begin{figure}[b]
		\centering
		\includegraphics[width=\linewidth]{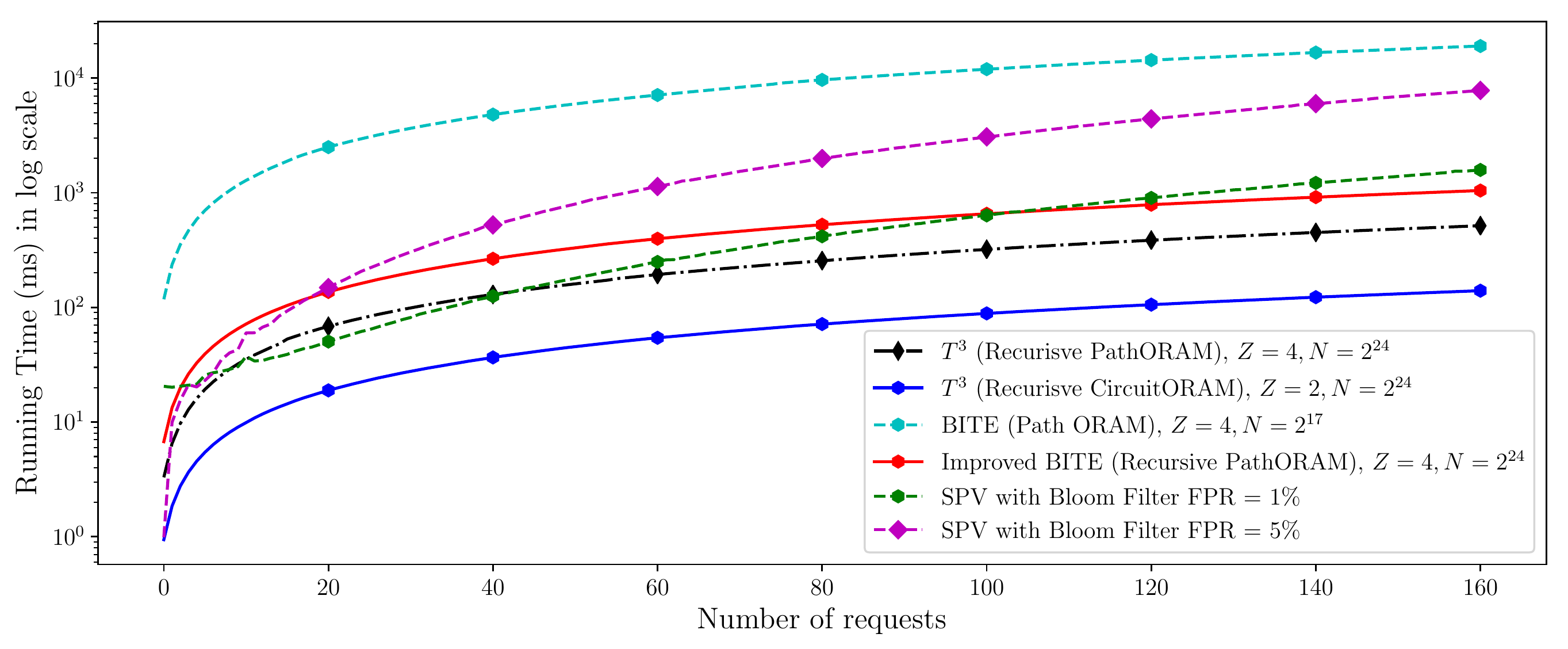}

		\caption{Performance of $T^3$ using \textsc{Path}/\textsc{Cicruit} ORAM with block of size $544$B, the current SPV with Bloom filter, Original BITE oblivious database block of size $32$kB, and improved BITE with block of size $544$B. For the SPV client with Bloom filter, we used the false positive rate of $1\%$ and $5\%$.}
		\label{fig:performance}
	\end{figure}

	\item \textit{Communication Overhead}: 
	In term of communication between client and server, \sys offers much lower communication overhead compared to the existing solution for SPV clients. 
	\sys does not need to provide the SPV clients with the Merkle proofs to its relevant transactions because all those proofs are validated by the Intel SGX before being added the ORAM tree. 
	Thus, one can reduce both the amount of work that the full node needs to perform and the amount of data that it needs to send to the SPV clients. 
	Moreover, \sys prunes all other information of transactions to extract only relevant data needed for client to determine balance and form new transactions, 
	while in the current SPV solution, due to the false positive rate used in the Bloom filter, the full client may send additional irrelevant information to the SPV client.  
	~\Cref{fig:communication} shows an overview of the communication cost of \sys compared to the current solution. 
	To give an estimation of the communication cost of the current SPV solution, we assumed that each request requires a separate Merkle proof. 
	Hence, for each request, the size of the proof is at least: $\log_2(\textsf{NoTXs})\cdot 32$ bytes where $\mathsf{NoTXs}$ is the number of transactions in one block.
	Moreover, the size of the transaction data is approximately $\mathsf{fpr} \cdot \textsf{BlockSize}$~bytes where the $\mathsf{fpr}$ is the false positive rate and the $\mathsf{BlockSize}$ is the size of the Bitcoin block.
	To compute the overhead cost we used block 551731, as an example, which has block size of 1149 KB and contains 3017 transactions. 
	However, in practice, we would expect the Bitcoin blocks to have different sizes; resulting, the communication cost to be different across blocks.
	Therefore, the results in \cref{fig:communication} is only an estimation on the communication overhead using the current SPV solution. 
	We omit the comparison to the communication overhead of BITE because both \sys and BITE return a fixed amount a data to the SPV client which is the output itself. 
	\begin{figure}[t]
		\centering
		\includegraphics[width=\linewidth]{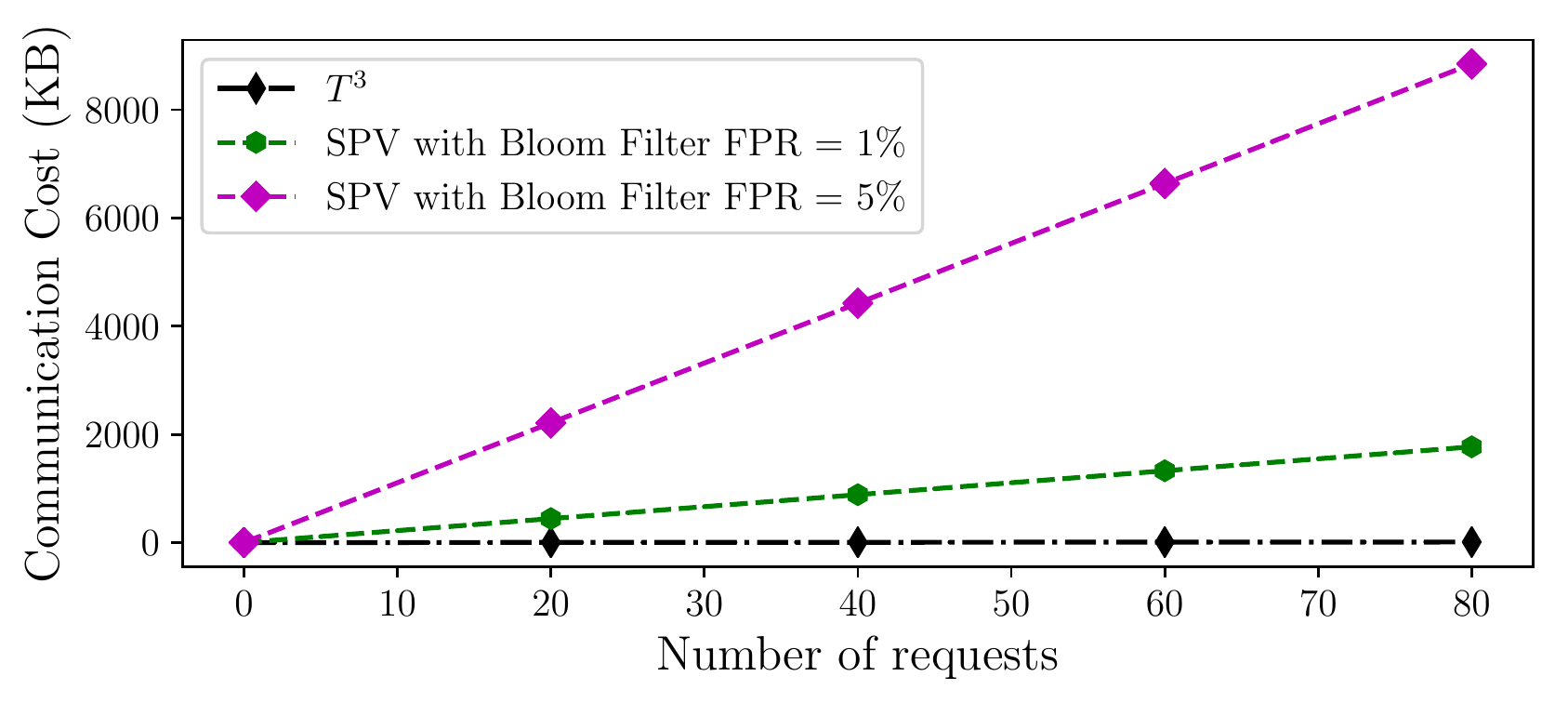}

		\caption{Communication cost of $T^3$ and the current SPV solution. Since both systems return the information of unspent outputs to the client, the communication overhead of BITE will be equal to the communication overhead of \sys.}   
		\label{fig:communication}
	\end{figure}
\end{asparaenum}

\paragraph{Storage Overhead}
As noted in the previous section, using ORAM incurs a constant size blow up of the storage of the UTXOs (e.g., $\approx 3-4\times$ for \circuitORAM, $6-8\times$ for \pathORAM). 
In particular, for \pathORAM with $Z=4$, the storage cost of ORAM trees is about $\approx 51GB$, 
and for \circuitORAM with $Z=2$, the storage cost of two ORAM tree is around $\approx 26GB$. 
For the EPC memory usage, we need to consider the size of the position map and the stash used by each enclave. 
In this work, since we use recursive ORAM constructions for both schemes, 
the size of the position map can be as small as possible at the cost of storing more recursive ORAM trees in the untrusted memory region. 
Precisely, in the prototype of our implementation, each thread uses $8$KB for the position map and a stash of size $2\cdot \log(N)\cdot Z \cdot \mathsf{BlockSize}$ bytes (e.g., for a tree of size $N=2^{24}$, we use $\approx0.62$MB bytes for \pathORAM, and $\approx 0.31$ MB for \circuitORAM). 
Thus, as long as the total memory usage by all threads/enclaves does not exceed the EPC limit (e.g., $96$MB), the performance of the system will not suffer from the expensive swapping operations discussed in section~\ref{sub:intel_sgx}. Moreover, for integrity protection, \sys only requires the server to store the Bitcoin header chain which is approximately $44$MB instead of storing a complete Bitcoin blockchain. Thus, in the future work, if \sys can handle the communication with the Bitcoin network without the reliance on the existing Bitcoin client, \sys reduces the need of storing the 230 GB of Bitcoin blockchain.

\subsection{Comparison with Other Oblivious Systems}
Here we provide a comparison between \sys and other generic oblivious systems. We compare our work with $\textsc{Bite}$~\cite{bite-spv-sgx} Oblivious Database 
that also uses ORAM and TEE to provide a generic PIR system for Bitcoin client, $\textsc{ConcurORAM}~\cite{concurORAM-chakraborti}$ that provides concurrency access to ORAM clients, $\textsc{Obliviate}~\cite{ndss-AhmadKSL18}$ that prevents leakage from file system accesses, and $\textsc{ZeroTrace}$ which proposes an efficient generic oblivious memory access primitives.
In particular, \cref{table:comparision} compares those systems based on the capabilities of supporting concurrency access, enabling recursive construction, and preventing side-channel leakage.

\begin{table}[b]
\centering
\begin{minipage}{\columnwidth}
\resizebox{.98\columnwidth}{!}{%
    \begin{tabular}{lccc}
      \toprule
      & \multicolumn{3}{c}{Capabilities}\\
      \cmidrule(lr){2-4}
      System    			                                & Concurrency & Recursive Construction & Side-channel Protection\\
      \midrule
      \textsc{ConcurORAM}~\cite{concurORAM-chakraborti}		& \cmark      & \xmark    & -~\footnote{\textsc{ConcurORAM} does not aim to provide side-channel protection for TEE. Hence, we omit this comparison.}           \\
      \textsc{Obliviate}~\cite{ndss-AhmadKSL18}    			& \xmark      & \xmark    & \cmark      \\
      \textsc{Zerotrace}~\cite{SasyGF18-zero-trace}    		& \xmark      & \cmark    & \cmark      \\
      \textsc{Bite} Oblivious Database~\cite{bite-spv-sgx} 	& \xmark      & \xmark    & \cmark      \\
      \sys 	        										& \cmark      & \cmark    & \cmark      \\
      \bottomrule
    \end{tabular}
}
\end{minipage}
\label{table:comparision}
\caption{Comparison between \sys and other oblivious systems.}
\end{table}

For generic trusted hardware-based systems like \textsc{Bite} oblivious database and \textsc{Obliviate}, while providing protection against side-channel leakage, those systems do not consider the use of recursive ORAM construction to reduce the EPC memory usage. 
Hence, the performance of their systems will degrade once the database becomes too large.  
Other works that harnesses the use of recursive ORAM construction are \textsc{Zerotrace}; however, concurrency is not supported in the current version of \textsc{Zerotrace}. Thus, without concurrency support, such systems will not scale well to handle Bitcoin SPV clients.
\textsc{ConcurORAM} is a recent ORAM construction that offers concurrency accesses from the clients; however, due to more optimized eviction strategy and complex synchronization schedule, the recursive construction of \textsc{ConcurORAM} introduces implementation challenges. 
Nevertheless, we believe that it can be an interesting future work to use \textsc{ConcurORAM} in the design of \sys. 
\section{System Analysis}
\label{sec:security-analysis}

\subsection{Security Claims}
\label{sub:privacy}
In order to prove the security properties of \sys's design, we put forth six
claims, each of which represents the security of a major component of \sys in term of privacy goal.

\paragraph{Claim 1. The managing enclave does not leak user-related information to an attacker.}
The managing enclave is responsible for three tasks ---
(a) converting wallet IDs to UTXOs, (b) creating and managing
threads which will perform read operations on the \readTree, 
and (c) handle the updates to be performed on the \updateTree.

Firstly, the conversion of wallet IDs to their respective UTXOs is
private since the channel between clients and the \managingEnclave is secured by the shared key during the remote attestation process.
More importantly, when receiving addresses from a client, the \managingEnclave uses blockmapping function (described in~\ref{subsubsec: btcintoORAM}) to map each address to a fixed number of ORAM blocks. 
This does not reveal information about the number of outputs belonging
to an address.
Secondly, each read thread performs the same operations irrespective
of the wallet ID provided to it, i.e., each thread simply retrieves
an ORAM block using ORAM accesses implemented with $\mathsf{cmov}$-based oblivious executions. 
Lastly, the only thing revealed by the update process of \sys is the
number of blocks updated into the Write Tree. However, this is public
information and \sys does not try to hide it.
Each update is performed using an ORAM access which ensures that the attacker is unaware of the final position of each block.

\paragraph{Claim 2. The optimized read operations on \readTree do not leak information.}
As explained in section~\ref{subsub:read-proc}, the \readTree is accessed using an optimized read
operation which chooses not to shuffle and write-back the retrieved path to the \readTree.
However, this is secure since each path corresponding to a UTXO can
only be accessed once during a read interval and will be shuffled
before the next interval.
%
%


\paragraph{Claim 3. The write operations performed on the \updateTree
do not leak information.}
There are two specific operations performed on the \updateTree  --- (a) the UTXOs
are updated based on the updated bitcoin block, and
(b) the previously accessed ORAM blocks are shuffled.
However, all of these updating accesses are standard ORAM operations implemented in a side-channel-resistant manners as previously done by~\cite{SasyGF18-zero-trace,ndss-AhmadKSL18}. Therefore, all write operations reveal no information about a user's UTXO.

\paragraph{Claim 4. The data fetched from the untrusted world to the TEE is correct.}
There are two major sources of data transferred from the untrusted to
the trusted world --- (a) the updated block fetched from the bitcoin
daemon after a fixed interval and (b) the ORAM tree blocks which are
fetched from the untrusted world into the TEE.
As mentioned in~\ref{subsub:write-proc}, Bitcoin blocks are fetched from outside the
enclave. However, \sys verifies the integrity of the Bitcoin block based on the proof of work and the header chain, and
since the cost of producing a valid block is expensive, we argue that \sys should be able to obtain valid block from the Bitcoin network.
Also, 
\sys maintains a Merkle Hash Tree (MHT) of the ORAM trees and therefore prevents malicious tampering by verifying all encrypted data fetched from the untrusted memory using the MHT.
All encrypted data fetched from the untrusted memory is verified using
the MHT.

%
\paragraph{Claim 5. The multiple threads involved do not create synchronization issues.}
Here, it is worth-noting that multiple threads are only involved while
accessing the Read Tree of \sys.
Thanks to the optimized read operation, \sys does not run into synchronization
bugs since there is no memory region that could be simultaneously written to
by more than one thread.
In particular, each thread shares the position map but only reads from the
position map.
Each thread contains its own stash memory which is written to separately by
each thread.

\paragraph{Claim 6. The memory interactions within the enclave are
side-channel-resistant.}
The design of $T^3$ incorporates defenses against the side-channel
threats~\cite{Xu15ControlledChannel,hid-sgx-sidechannel-usenix17,shadow-branch-lee-usenix17}
plaguing Intel SGX.
In particular, we used ORAM operations to hide all data access patterns on the untrusted memory region, 
and we incorporated similar oblivious operation techniques introduced in~\cite{racoon,ndss-AhmadKSL18,
SasyGF18-zero-trace} to prevent operations inside the enclave from leaking sensitive information.
Finally, the implementation of \sys is also secure against branch-prediction attacks since each
individual operation (e.g., accessing Read Tree, updating Write Tree etc.) takes the same sequence
of branches and therefore reveals no information to the attacker, from the accessed branches.

\subsection{Denial of Service Attacks from Malicious Clients}

While the design of \sys is pratical, a malicious client can still incur a large processing time on the server by creating lots of addresses and sending large number of requests for those requests. 
One way to mitigate such attack is to apply fees on users of the service.
Another approach to mitigate denial of service attack is to use a cuckoo filter~\cite{cuckoo-filter-Fan} to load all addresses from the UTXO set. Upon receiving requests from client, the managing enclave can verify if the address matches the filter as well as the proof of ownership of that address before performing ORAM accesses. Moreover, since Cuckoo filter data structure supports deletion operation, the system can add and remove addresses when performs updating. 
In other word, in order to perform the denial of service attack, clients need both the proof of ownership as well as a certain amount of Bitcoin in each address. 
Hence, it will cost more for the client to perform such attack.

\subsection{Other Goals Achieved by \sys}
In this subsection, in addition to the \textbf{Privacy} goal describe in \cref{sub:privacy}, we explain how \sys achieves the other goals mentioned in
\autoref{sub:goal}.

\paragraph{Validity.} Under the assumption that the adversary does not have enough
computational power to form a new Bitcoin block, the system will only obtain valid 
transaction by verifying the Merkle root and the proof of work of the Bitcoin block.

\paragraph{Completeness.} By  offering  different  ways  of  mapping  between  Bitcoin  addresses  and  ORAM  block  id,  we  can  offer services to $92-96\%$ of all clients
with some trade-off between storage overhead and performance.

\paragraph{Efficiency.} Our  contribution  to  efficiency  is  threefold.  First, our 
system  is  able  to  handle  bursty  requests  from  client concurrently. The core
idea is to separate the effect of a standard ORAM access into different enclaves.
Thus, the multiple reading enclave  can  concurrently  perform  read  operations at
the  same  time  that  the writing enclave  can  perform  a  non-blocking Evict 
procedure on the other tree. Second, by having two ORAM trees, we minimize the
downtime of the system by having the writing enclave performed updates on one
tree and reading enclave handled clients’ requests on the other tree. 
The server  downtime  depends  on  the  number  of  requests  that  the system 
receives  when  the writing enclave  performs  ORAM updates  on  the original ORAM 
tree.  Finally,  by  enforcing clients  to  provide  the  proof  of  ownership  of
the  address  and assuming  that  a  honest  client  is  rational,  we prevent  other clients from querying addresses that do not belong to them.

\section{Related Work}
\label{sec:related-work}

\paragraph{General SGX Systems.}
Haven~\cite{haven} is a pioneering work on SGX computing
enabling native application SGX porting on windows.
Graphene~\cite{graphene} provides a linux-based LibOS for
SGX programs.
Ryoan~\cite{ryoan} retrofits Native Client to provide sandboxing
mechanisms for Intel SGX.
Eleos~\cite{eleos} provides a user-space extension of enclave
memory using custom encryption.
\sys uses some concepts from Eleos especially in the way we store
the ORAM tree using custom encryption outside the SGX enclave.

\paragraph{SGX Side-channels.}
There are three main memory-based side-channel vulnerabilities disclosed
within Intel SGX, namely, page table-based attacks~\cite{Xu15ControlledChannel},
cache-based attacks~\cite{cache-based-attack}, and branch-prediction
attacks~\cite{hid-sgx-sidechannel-usenix17}.
Furthermore, since SGX relies on the untrusted OS for system-call
handling, it is also vulnerable to IAGO attacks~\cite{iago-attack}.
Leaky Cauldron~\cite{leaky-cauldron} presents an overview of the possible attack vectors
against SGX programs.
\sys is secure against all disclosed memory-based side-channels since it
uses oblivious RAM (ORAM) to protect the access-patterns.
Furthermore, \sys uses oblivious memory primitives to secure the runtime
ORAM operations as well as its library.

\paragraph{Oblivious Systems.}
Raccoon~\cite{racoon} provided a technique to protect a small part of
a user program against all digital side-channels.
\textsc{Obliviate}~\cite{ndss-AhmadKSL18} and \textsc{ZeroTrace}~\cite{SasyGF18-zero-trace}
used ORAM-based operations to protect files and data arrays respectively
inside Intel SGX. 
Thang Hoang et al.~\cite{thang-hoang:posup-popets} proposed a combination of TEE and ORAM to design oblivious search and update platform for large dataset.
Eskandarian et a.~\cite{oblidb} leveraged Intel SGX and Path ORAM to propose oblivious SQL database management system.

Recently, Chakraborti et al. proposed a new parallel ORAM scheme called ConcurORAM~\cite{concurORAM-chakraborti}. 
Similar to the \sys design, ConcurORAM also uses two-tree structure to propose a non-blocking eviction procedure, and the system periodically synchronizes two trees to maintain the privacy of the user's access pattern. 
In ConcurORAM, the scheme requires the client to download the query log and the result log to learn about ongoing queries before requesting ORAM accesses.
Hence, if we combine ConcurORAM along with Intel SGX to design this system, the use of query and result logs introduces additional storage overhead to the limited storage capacity of the Intel SGX.
More importantly, the author also noted that ConcurORAM cannot be trivially extended to a recursive ORAM construction because of concurrent data structure accesses. 
However, if ConcurORAM can be implemented into a recursive ORAM construction, we believe that ConcurORAM can be an interesting alternate solution for the ORAM scheme used in the design of \sys.  

Another interesting parallel ORAM construction is TaoStore~\cite{taostore-sahin}. 
TaoStore assumes a trusted proxy that handles concurrent client's requests, and the proxy runs a scheduler to make sure that there are no conflicting queries while preventing no information leakage. 
However, similar to ConcurORAM, the implementation of TaoStore is limited to the non-recursive construction of Path ORAM which is not suitable when combining with TEE with limited trusted memory capacity. 
This work aims to design a simpler design that is suitable for any flavor of tree-based ORAM schemes.  


\paragraph{TEE for cryptocurrencies.} The research community has investigated different ways of combining TEE with blockchain to both improve privacy and scalability of blockchains. 
Obscuro~\cite{obscuro-muoi-tran} is a Bitcoin transaction mixer implemented in Intel SGX that addresses the linkability issue of Bitcoin transactions.
Teechan~\cite{teechan} is an off-chain payment micropayment channel that harnesses TEE to increase transaction throughput of Bitcoin. Bentov et al. proposed a new design that uses Intel SGX to build a real-time cryptocurrency exchange. 
Another example is the Towncrier system~\cite{towncrier-Zhang} that uses TEE for securely transferring data to smart contract. 
Another prominent example is Ekiden~\cite{ekiden-Cheng} which proposed off-chain smart contract execution using TEE. Finally, ZLite~\cite{ZliTE} system is another example which used ORAM and TEE to provide SPV clients with oblivious access. However, similar to BITE, ZLite employed non-recursive \pathORAM as it is, and thus, the scalability and efficiency of the system is inherently limited due to the non-concurrent accesses.



Osuntokun et al.~\cite{osuntokun-client-filter} recently present a new proposal for Bitcoin SPV clients. 
This proposal is the building block for systems like Neutrino.
In particular, each block will have its own Bloom filter. 
The SPV client first fetches the filter from the full client and decides to download the block from another client if transactions of interested are in the block. 
This approach, however, introduces an additional communication overhead to the client. 
In particular, a client with lots of transactions scattered among different blocks needs to download lots of full blocks, and performing verification of block can be expensive for the resource-constrained client. 
This approach does not necessarily provide more privacy for the SPV client as the full client still learn the block that the addresses belong to.


\section{Conclusion} 
\label{sec:conclusion}

In this paper, we developed a system design that supports a large-scale oblivious search on unspent transaction outputs for Bitcoin SPV clients while efficiently maintains the state of the Bitcoin $\mathsf{UTXO}$ set via an oblivious update protocol. 
Our design leverages the TEE capabilities of Intel SGX to provide strong privacy and security guarantees to Bitcoin SPV client even with the presence of a potentially malicious server.
Moreover, by putting reasonable assumptions on the accessing frequency of the SPV clients, we present novel ORAM construction that offers both privacy and efficiency to the clients.
We showed that the prototype of the system is much more efficient than the use of standard ORAM construction as it is. 
In particular, due to the use of two ORAM trees in the design of \sys, we improve the performance of an ORAM access by two time and allow the system to handle concurrent client's requests.
Also, our implementation shows one order of magnitude performance gain when combining recursive ORAM construction the current existing construction to stress the importance of using recursive ORAM construction in TEE with restricted memory.
Finally, while the applicability of \SystemName in cryptocurrencies beyond Bitcoin is apparent, we believe our work will motivate further research on oblivious memory with the restricted access patterns.

\balance
{\small
\bibliographystyle{plain}
\bibliography{reference}
}
\end{document}